\newcommand{\eps}{\varepsilon}
\newcommand{\conv}{\mathrm{conv}}
\newcommand{\diam}{\mathrm{diam}}
\newcommand{\DS}{\mathrm{DS}}
\newcommand{\NN}{\mathbb{N}}
\newcommand{\RR}{\mathbb{R}}
\def\P{\mathcal{P}}
\newcommand{\etal}{{et~al.}}
\newcommand{\ie}{{i.e.}}
\newcommand{\eg}{{e.g.}}
\newcommand{\later}[1]{{}}
\newcommand{\old}[1]{{}}
\long\def\ignore#1{}
\title{On the Stretch Factor of Polygonal Chains\thanks{A preliminary 
version of this paper appeared in the
\emph{Proceedings of the 44th International Symposium on Mathematical 
Foundations of Computer Science},
(MFCS 2019), Aachen, Germany, August 2019, Vol. 138 of LIPIcs, 56:1--56:14.}
}
\titlerunning{On the Stretch Factor of Polygonal Chains}
\author{Ke Chen}
{Department of Computer Science,
University of Wisconsin--Milwaukee, USA}
{kechen@uwm.edu}
{0000-0001-5470-6621}
{}
\author{Adrian Dumitrescu}
{Algoresearch L.L.C., Milwaukee, WI 53217, USA}
{ad.dumitrescu@gmail.com}
{0000-0002-1118-0321}%
{}
\author{Wolfgang Mulzer}
{Institut f\"ur Informatik,
Freie Universit\"at Berlin, Germany}
{mulzer@inf.fu-berlin.de}
{0000-0002-1948-5840}%
{Supported in part by ERC STG 757609.}
\author{Csaba D. T\'oth}
       {Department of Mathematics, California State University Northridge, Los Angeles, CA 91330-8313;
         and Department of Computer Science, Tufts University, Medford, MA 02155, USA}
{csaba.toth@csun.edu}
{0000-0002-8769-3190}
{Supported in part by NSF CCF-1422311, CCF-1423615, and DMS-1800734.}
\authorrunning{K.~Chen, A.~Dumitrescu, W.~Mulzer, and C.D.~T\'oth}
\keywords{polygonal chain, vertex dilation, Koch curve, recursive construction}
\begin{document}

\maketitle

\begin{abstract}
Let $P=(p_1, p_2, \dots, p_n)$ be a polygonal chain in $\RR^d$.
The \emph{stretch factor} of $P$ is the ratio
between the total length of $P$ and the distance of its
endpoints, $\sum_{i = 1}^{n-1} |p_i p_{i+1}|/|p_1 p_n|$.
For a parameter $c \geq 1$, we call $P$ a \emph{$c$-chain} if
$|p_ip_j|+|p_jp_k| \leq c|p_ip_k|$, for every triple $(i,j,k)$, $1 \leq i<j<k \leq n$.
The stretch factor is a global property: it measures how close $P$ is
to a straight line, and it involves all the vertices of $P$; being a $c$-chain,
on the other hand, is a \emph{fingerprint}-property: it only depends on
subsets of $O(1)$ vertices of the chain.

We investigate how the $c$-chain property influences the stretch factor in the plane:
(i) we show that for every $\eps > 0$, there is a noncrossing $c$-chain that has
stretch factor $\Omega(n^{1/2-\eps})$, for sufficiently large constant $c=c(\eps)$;
(ii) on the other hand, the stretch factor of a $c$-chain $P$ is $O\left(n^{1/2}\right)$,
for every constant $c\geq 1$, regardless of whether $P$ is crossing or noncrossing; and
(iii) we give a randomized algorithm that can determine, for a polygonal chain
  $P$ in $\RR^2$ with $n$ vertices, the minimum $c\geq 1$ for which $P$ is a $c$-chain
  in $O\left(n^{2.5}\ {\rm polylog}\ n\right)$ expected time and $O(n\log n)$ space.
These results generalize to $\RR^d$. For every dimension $d\geq 2$ and every $\eps>0$,
we construct a noncrossing $c$-chain that has stretch factor
$\Omega\left(n^{(1-\eps)(d-1)/d}\right)$;
on the other hand, the stretch factor of any $c$-chain is $O\left((n-1)^{(d-1)/d}\right)$;
for every $c>1$, we can test whether an $n$-vertex chain in $\RR^d$ is a $c$-chain in
$O\left(n^{3-1/d}\ {\rm polylog}\ n\right)$ expected time and $O(n\log n)$ space.
\end{abstract}

\section{Introduction} \label{sec:intro}

Given a set $S$ of $n$ point sites in a Euclidean space $\RR^d$,
what is the best way to connect $S$ into a \emph{geometric network (graph)}?
This question has motivated researchers for a long time, going
back as far as the 1940s, and beyond~\cite{F40,Ve51}. Numerous
possible criteria for a good geometric network have been proposed,
perhaps the most basic being the \emph{length}.
In 1955, Few~\cite{Fe55} showed that for any set of $n$ points
in a unit square, there is a traveling salesman tour of length
at most $\sqrt{2n}+7/4$. This was improved to at most $0.984\sqrt{2n}+11$
by Karloff~\cite{Ka89}. Similar bounds hold for the shortest
spanning tree and the shortest rectilinear spanning
tree~\cite{CG81,DM18,GP68}. Besides length, two further key factors
in the quality of a geometric network are the \emph{vertex dilation}
and the \emph{geometric dilation}~\cite{NS07}, both of which measure
how closely shortest paths in a network approximate the
Euclidean distances between their endpoints.

The \emph{dilation} (also called \emph{stretch factor}~\cite{MM17}
or \emph{detour}~\cite{AKK+08})
between two points $p$ and $q$ in a geometric graph $G$ is defined as the ratio between
the length of a shortest path from $p$ to $q$ and the Euclidean distance $|pq|$.
The \emph{dilation} of the graph $G$ is the maximum dilation over all pairs of vertices in $G$.
A graph in which the dilation is bounded
above by $t \geq 1$ is also called a \emph{$t$-spanner}
(or simply a \emph{spanner} if $t$ is a constant).
A complete graph in Euclidean space is clearly a $1$-spanner. Therefore,
researchers focused on the dilation of graphs with certain additional constraints,
for example, noncrossing (\ie, plane) graphs. In 1989, Das and Joseph~\cite{DJ89}
identified a large class of plane spanners (characterized
by two simple local properties).
Bose~\etal\cite{BGS05} gave
an algorithm that constructs for any set of planar sites
a plane $11$-spanner with bounded degree.
On the other hand, Eppstein~\cite{Ep02}  analyzed
a fractal construction showing that \emph{$\beta$-skeletons}, a natural class
of geometric networks, can have arbitrarily large dilation.

The study of dilation also raises algorithmic questions.
Agarwal~\etal~\cite{AKK+08} described randomized algorithms for computing the dilation of
a given path (on $n$ vertices) in $\RR^2$ in $O(n \log n)$ expected time.
They also presented randomized algorithms for computing the dilation
of a given tree, or cycle, in $\RR^2$ in $O(n\log^2 n)$ expected time.
Previously, Narasimhan and Smid~\cite{NS00} showed that an $(1+\eps)$-approximation
of the stretch factor of any path, cycle, or tree can be computed in $O(n \log n)$ time.
Klein~\etal~\cite{KKNS09} gave randomized algorithms for a path, tree, or cycle in $\RR^2$
to count the number of vertex pairs whose dilation is below a given threshold in $O(n^{3/2+\eps})$
expected time.
Cheong~\etal~\cite{CHL08} showed that it is NP-hard to determine the existence of a
spanning tree on a planar point set whose dilation is at most a given value.
More results on plane spanners can be found in the monograph dedicated to this subject~\cite{NS07}
or in several surveys~\cite{Ep00,BS13,MM17}.

We investigate a basic question about the dilation of polygonal chains.
We ask how the dilation between the endpoints of a polygonal chain
(which we will call the \emph{stretch factor}, to distinguish it from
the more general notion of dilation) is influenced by \emph{fingerprint} properties of the
chain, \ie, by properties that are defined on $O(1)$-size subsets of the vertex set.
Such fingerprint properties play an important role in geometry; classic
examples include the \emph{Carath\'eodory property}\footnote{Given a finite set $S$ of points
  in $d$ dimensions, if every $d + 2$ points in $S$ are in convex position,
  then $S$ is in convex position.}~\cite[Theorem~1.2.3]{Mat02}
or the \emph{Helly property}\footnote{Given a finite collection of convex
  sets in $d$ dimensions, if every $d + 1$ sets have nonempty intersection,
  then all sets have nonempty intersection.}~\cite[Theorem~1.3.2]{Mat02}.
In general, determining the effect of a fingerprint property may prove elusive---given
$n$ points in the plane, consider the simple property that every $3$ points determine $3$
distinct distances. It is unknown~\cite[p.~203]{BMP05} whether this property implies
that the total number of distinct distances grows superlinearly in $n$.
Furthermore, fingerprint properties appear in the general study of
\emph{local versus global properties of metric spaces}, which is
highly relevant to combinatorial approximation algorithms based on
mathematical programming relaxations~\cite{ALNRRV12}.

In the study of dilation, interesting fingerprint properties
have also been found.
For example, a (continuous) curve $C$ is said to have the
\emph{increasing chord property}~\cite{CFG91, LM72} if for any points $a$, $b$, $c$, $d$
that appear on $C$ in this order, we have $|ad| \geq |bc|$.
The increasing chord property implies that $C$ has (geometric) dilation at most
$2\pi/3$~\cite{Ro94}. A weaker property is the \emph{self-approaching property}:
a (continuous) curve $C$ is self-approaching if for any points $a$, $b$, $c$ that appear
on $C$ in this order, we have $|ac| \geq |bc|$.
Self-approaching curves have dilation at most $5.332$~\cite{IKL99}
(see also~\cite{AAI+99}), and they have found interesting applications in the field of
graph drawing~\cite{ACG+12,BKL17,NPR16}.

We introduce a new natural fingerprint property and see that it can constrain the stretch factor
of a polygonal chain, but only in a weaker sense than one may expect; we also
provide algorithmic results on this property. Before providing details, we give
a few basic definitions.

\subparagraph*{Definitions.}
A \emph{polygonal chain} $P$ in $\RR^d$ is specified by a sequence of $n$ points
$(p_1, p_2, \dots, p_n)$, called \emph{vertices}.
The chain $P$ consists of $n-1$ line segments between consecutive vertices.
We say $P$ is \emph{simple} if only consecutive line segments intersect and they only
intersect at their endpoints.
Given a polygonal chain $P$ in $\RR^d$ with $n$ vertices and a parameter $c\geq 1$,
we call $P$ a \emph{$c$-chain} if for all $1\leq i < j < k\leq n$,
we have
\begin{equation} \label{eq:c-chain}
|p_ip_j|+|p_jp_k| \leq c|p_ip_k|.
\end{equation}
Observe that the $c$-chain condition is a fingerprint condition that is not really a local
dilation condition---it is more a combination between the local chain substructure
and the distribution of the points in the subchains.

The \emph{stretch factor} $\delta_P$ of $P$ is defined as the dilation between
the two end points $p_1$ and $p_n$ of the chain:
\[
\delta_P = \frac{\sum_{i = 1}^{n-1} |p_ip_{i+1}|}{|p_1p_n|}.
\]
Note that this definition is different from the more general notion of
dilation (also called \emph{stretch factor}~\cite{MM17}) of a graph
which is the maximum dilation over all pairs of vertices.
Since there is no ambiguity in this paper,
we will just call $\delta_P$ the stretch factor of $P$.

For example, the polygonal chain $P=((0,0), (1,0), \dots ,(n,0))$ in $\RR^2$ is a $1$-chain with
stretch factor $1$; and
$Q=((0,0), (0,1), (1,1), (1,0))$ is a $(\sqrt2+1)$-chain with stretch factor $3$.

Without affecting the results, the floor and ceiling functions are omitted in our calculations.
For a positive integer $t$, let $[t]=\{1,2,\dots,t\}$.
For a point set $S$, let $\conv(S)$ denote the convex hull of $S$.
All logarithms are in base 2, unless stated otherwise.

\subparagraph*{Our results.}
In the Euclidean plane $\RR^2$,
we deduce three upper bounds on the stretch factor of a $c$-chain $P$
with $n$ vertices (Section~\ref{sec:upper}).
In particular, we have
(i)~$\delta_P \leq c(n-1)^{\log c}$,
(ii)~$\delta_P\leq c(n-2)+1$, and
(iii)~$\delta_P =O\left(c^2\sqrt{n-1}\right)$.

From the other direction, we obtain the following lower bound in $\RR^2$ (Section~\ref{sec:lower}):
For every $c\geq 4$, there is a family $\P_c=\{P^m\}_{m\in\NN}$ of simple $c$-chains,
so that $P^m$ has $n=4^m+1$ vertices and stretch factor $(n-1)^{\frac{1+\log (c-2) - \log c}{2}}$,
where the exponent converges to $1/2$ as $c$ tends to infinity.
The lower bound construction does not extend to the case of $1<c<4$, which remains open.

Then we generalize the results to higher dimensional Euclidean spaces
(Section~\ref{sec:higher-dim}):
For all integers $d\geq 2$, we show that any $c$-chain $P$ with $n$ vertices in $\RR^d$
has stretch factor $\delta_P =O\left(c^2(n-1)^{(d-1)/d}\right)$.
On the other hand, for any constant $\eps>0$ and sufficiently large $c=\Omega(d)$,
we construct a $c$-chain in $\RR^d$ with $n$ vertices and stretch factor at least
$(n-1)^{(1-\eps)(d-1)/d}$.

Finally, we present two algorithmic results (Section~\ref{sec:algo}) for all fixed dimensions $d\geq 2$:
(i)~A randomized algorithm that decides, given a polygonal chain $P$ in $\RR^d$ with
$n$ vertices and a threshold $c>1$, whether $P$ is a
$c$-chain in $O\left(n^{3-1/d}\ {\rm polylog}\ n\right)$
expected time and $O(n\log n)$ space.
(ii)~As a corollary, there is a randomized algorithm that finds, for a polygonal chain
$P$ with $n$ vertices, the minimum $c\geq 1$ for which $P$ is a $c$-chain
in $O\left(n^{3-1/d}\ {\rm polylog}\ n\right)$ expected time and $O(n\log n)$ space.

\section{Upper Bounds in the Plane} \label{sec:upper}

At first glance, one might expect the stretch factor of a $c$-chain, for $c\geq 1$, to be bounded by
some function of $c$. For example, the stretch factor of a $1$-chain is necessarily $1$.
We derive three upper bounds on the stretch factor of a $c$-chain with $n$ vertices
in terms of $c$ and $n$ (cf.~Theorems~\ref{thm:logc}--\ref{thm:1/2});
see Fig.~\ref{fig:upperbds} for a visual comparison between the bounds.
For large $n$, the bound in Theorem~\ref{thm:logc} is the best for $1 \leq c  \leq 2^{1/2}$,
while the bound in Theorem~\ref{thm:1/2} is the best for $c > 2^{1/2}$.
In particular, the bound in Theorem~\ref{thm:logc} is tight for $c=1$.
When $n$ is comparable with $c$, more specifically,
for $c\geq 2$ and $n\leq 64c^2+2$,
the bound in Theorem~\ref{thm:linear} is the best.

\begin{figure}[!ht]
\centering
\includegraphics[width=0.5\textwidth]{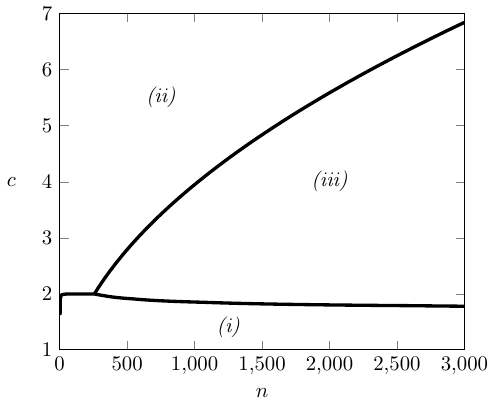}
\caption{The values of $n$ and $c$ for which (i) Theorem~\ref{thm:logc}: $\delta_P\leq c(n-1)^{\log c}$,
(ii) Theorem~\ref{thm:linear}: $\delta_P\leq c(n-2)+1$,
  and (iii) Theorem~\ref{thm:1/2}: $\delta_P\leq 8c^2\sqrt{n-1}$
  give the current best upper bound.\label{fig:upperbds}}
\end{figure}

Our first upper bound is obtained by a recursive application of the
$c$-chain property. It holds for any positive distance
function that
need not even
satisfy the triangle inequality.
\begin{theorem}\label{thm:logc}
For a $c$-chain $P$ with $n$ vertices, we have $\delta_P \leq c(n-1)^{\log c}$.
\end{theorem}
\begin{proof}
We prove, by induction on $n$, that
\begin{equation}\label{eq:logc}
\delta_P\leq c^{\left\lceil\log (n-1)\right\rceil},
\end{equation}
for every $c$-chain $P$ with $n \geq 2$ vertices. In the base case, $n=2$, we
have $\delta_P=1$ and $c^{\left\lceil\log (2-1)\right\rceil}=1$.
Now let $n \geq 3$, and assume that (\ref{eq:logc}) holds for every $c$-chain
with fewer than $n$ vertices.
Let $P = (p_1, \dots, p_n)$ be a $c$-chain with $n$ vertices. Then,
applying (\ref{eq:logc}) to the first and second half of $P$, followed
by the $c$-chain property for the first, middle, and last vertex of $P$, we get
\begin{align*}
\sum_{i=1}^{n-1}|p_{i}p_{i+1}|
&\leq \sum_{i=1}^{\lceil n/2\rceil-1}|p_{i}p_{i+1}| + \sum_{i=\lceil n/2\rceil}^{n-1} |p_{i}p_{i+1}|\\
&\leq c^{\left\lceil\log (\lceil n/2\rceil-1) \right\rceil} \left( |p_1p_{\lceil n/2\rceil}| + |p_{\lceil n/2\rceil}p_n|\right)\\
&\leq c^{\left\lceil\log (\lceil n/2\rceil-1) \right\rceil}\cdot  c|p_1p_n|\\
&\leq  c^{\left\lceil\log (n-1)\right\rceil} |p_1p_n|,
\end{align*}
so (\ref{eq:logc}) holds also for $P$.
Consequently,
\[ \delta_P \leq c^{\left\lceil\log (n-1)\right\rceil} \leq c^{\log (n-1) +1} =c \cdot c^{\log (n-1)}
= c \, (n-1)^{\log{c}}, \]
as required.
\end{proof}

Our second upper bound combines the $c$-chain property with the triangle
inequality, and it holds in any metric space.
\begin{theorem}\label{thm:linear}
For a $c$-chain $P$ with $n$ vertices, we have $\delta_P\leq c(n-2)+1$.
\end{theorem}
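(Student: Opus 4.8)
The plan is to read the $c$-chain condition geometrically. Applying \eqref{eq:c-chain} to each triple $(1,j,n)$ with $1<j<n$ gives $|p_1p_j|+|p_jp_n|\le c|p_1p_n|$, so every vertex of $P$ lies in the closed ellipse $E$ with foci $p_1,p_n$ whose boundary is the locus of points whose distances to the two foci sum to $c|p_1p_n|$. Writing $L=|p_1p_n|$, this ellipse has semi-major axis $cL/2$ and focal distance $L$, so its longest chord (the major axis) has length $cL$; in particular $\diam(E)=cL$. This ellipse is the only place where the Euclidean structure enters.

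With this picture I would bound the chain edge by edge. For an interior edge $p_ip_{i+1}$ with $2\le i\le n-2$, both endpoints lie in $E$, so $|p_ip_{i+1}|\le \diam(E)=cL$; equivalently, adding the two triangle inequalities $|p_ip_{i+1}|\le |p_ip_1|+|p_1p_{i+1}|$ and $|p_ip_{i+1}|\le |p_ip_n|+|p_np_{i+1}|$ and invoking the $c$-chain bound on each focus sum yields $2|p_ip_{i+1}|\le 2cL$. For the two edges incident to a focus I would establish the sharper estimate $|p_1p_2|,|p_{n-1}p_n|\le (c+1)L/2$: from the triple $(1,2,n)$ we have $|p_1p_2|+|p_2p_n|\le cL$, while the reverse triangle inequality gives $|p_2p_n|\ge |p_1p_2|-L$, and combining the two yields $2|p_1p_2|-L\le cL$, i.e.\ $|p_1p_2|\le (c+1)L/2$; the bound on $|p_{n-1}p_n|$ is symmetric via the triple $(1,n-1,n)$.

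Summing over the $n-1$ edges then finishes the proof: the two edges incident to the foci contribute at most $2\cdot\frac{(c+1)L}{2}=(c+1)L$, and the remaining $n-3$ interior edges contribute at most $(n-3)cL$, for a total of $(c+1)L+(n-3)cL=\big(c(n-2)+1\big)L$, whence $\delta_P\le c(n-2)+1$. The small cases $n=2,3$ are checked directly, the edge count degenerating gracefully (for $n=3$ there are no interior edges and both edges are focus-incident).

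The crux is the $(c+1)/2$ bound on the two extreme edges. The naive estimate that bounds every edge by $\diam(E)=cL$ only gives $\delta_P\le c(n-1)$, so the entire improvement by the additive $c-1$ comes from exploiting that $p_1$ and $p_n$ sit at the \emph{foci} of $E$---not at the apexes of its major axis---so that an edge emanating from a focus is strictly shorter than the full diameter. I also note that a direct induction peeling off $p_n$ does not work, since the inductive distance $|p_1p_{n-1}|$ may be as large as $cL\gg L$ and cannot be re-normalized against $L$; the edge-by-edge accounting above sidesteps this difficulty.
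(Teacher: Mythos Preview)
Your proof is correct and follows essentially the same approach as the paper: confine all vertices to the ellipse with foci $p_1,p_n$ and sum $c|p_1p_n|$, bound each of the $n-3$ interior edges by the diameter $c|p_1p_n|$, and bound the two focus-incident edges by $(c+1)|p_1p_n|/2$. The only cosmetic difference is that the paper simply asserts the $(c+1)/2$ bound as the maximum distance from a focus to a point of the ellipse, whereas you derive it algebraically via the reverse triangle inequality; these are the same fact.
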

\begin{proof}
Without loss of generality, assume that $|p_1p_n|=1$.
For every $1<i<n$, the $c$-chain property implies
$|p_1p_i|+|p_ip_n|\leq c|p_1p_n| = c$, hence
\begin{equation}\label{eq:metric1}
|p_1p_i|\leq c-|p_ip_n|.
\end{equation}
The triangle inequality yields
\begin{equation}\label{eq:metric2}
|p_1p_i|\leq |p_1p_n|+|p_np_i|=1+|p_ip_n|.
\end{equation}
The combination of \eqref{eq:metric1} and \eqref{eq:metric2} gives
$|p_1p_i|\leq \frac{c+1}{2}$. Analogous argument for $p_n$ (in place of $p_1$)
yields $|p_ip_n|\leq \frac{c+1}{2}$.

For every pair $1< i<j<n$, the triangle inequality implies
\begin{equation*}
2|p_ip_j|
\leq (|p_ip_1|+|p_1p_j|)+(|p_ip_n|+|p_np_j|)
=   (|p_1p_i|+|p_ip_n|)+(|p_1p_j|+|p_jp_n|)\leq 2c,
\end{equation*}
hence $|p_ip_j|\leq c$.
Overall, the stretch factor of $P$ is bounded above by
\begin{align*}
\delta_P &= \frac{\sum_{j=1}^{n-1}|p_{j}p_{j+1}|}{|p_1p_n|}
= |p_1p_2|+|p_{n-1}p_n|+\sum_{j=2}^{n-2}|p_jp_{j+1}| \\
&\leq \frac{c+1}{2}+\frac{c+1}{2}+c(n-3) = c(n-2)+1,
\end{align*}
as claimed.
\end{proof}

Our third upper bound uses properties of the Euclidean plane
(specifically, a volume argument) to bound the number of long edges in $P$.
\begin{theorem}\label{thm:1/2}
For a $c$-chain $P$ with $n$ vertices,
we have $\delta_P =O\left(c^2\sqrt{n-1}\right)$.
\end{theorem}
\begin{proof}
Let $P=(p_1,\dots, p_n)$ be a $c$-chain, for some constant $c \geq 1$,
and let $L=\sum_{i=1}^{n-1}|p_ip_{i+1}|$ be its length.
We may assume that $p_1 p_n$ is a horizontal segment of unit length.
By the $c$-chain property, every point $p_j$, $1 < j < n$, lies in an ellipse $E$
with foci $p_1$ and $p_n$;
see \textsc{Fig.}~\ref{fig:ellipse}.
The diameter of $E$ is its major axis, whose length is $c$.
Let $U$ be a disk of radius $c/2$ concentric with $E$, and note that $E\subset U$
\begin{figure}[htpb]
\centering
\begin{tikzpicture}[scale=0.5]
\draw (5,0) circle [x radius=5, y radius=4];
\draw (0,0) --node[above]{$\frac{c-1}{2}$} (2,0)node[circle, fill, inner sep=1pt, label=below:$p_1$]{}
-- node[above]{$1$} (8,0)node[circle, fill, inner sep=1pt, label=below:$p_n$]{}
-- node[above]{$\frac{c-1}{2}$} (10,0);
\draw[dashed] (2,0) -- node[left=2pt]{$\frac{c}{2}$} (5, 4)
-- node[right=2pt]{$\frac{c}{2}$} (8, 0);
\draw (5,0) circle [radius=5];
\draw[dashed] (5,0)node[circle, fill, inner sep=1pt]{} -- node[left]{$\frac{c}{2}$} (5,-5);
\node at (7, -3) {$E$};
\node at (7, -4.2) {$U$};
\end{tikzpicture}
\caption{The entire chain $P$ lies in an ellipse $E$ with foci $p_1$ and $p_n$.
$E$ lies in a cocentric disk $U$ of radius $c/2$.}\label{fig:ellipse}
\end{figure}

We set $x=4c^2/\sqrt{n-1}$; and let $L_0$ and $L_1$ be the sum of lengths of all edges in $P$
of length at most $x$ and more than $x$, respectively. By definition, we have $L=L_0+L_1$ and
\begin{equation}\label{eq:short}
L_0\leq (n-1)x = (n-1)\cdot 4c^2/\sqrt{n-1} = 4c^2\sqrt{n-1}.
\end{equation}
We shall prove that $L_1\leq 4c^2\sqrt{n-1}$, implying $L\leq 8c^2\sqrt{n-1}$.
For this, we further classify the edges in $L_1$ according to their lengths:
For $\ell=0,1,\dots , \infty$, let
\begin{equation}\label{eq:PL1}
P_\ell=\left\{p_i: 2^\ell x< |p_ip_{i+1}|\leq 2^{\ell+1}x\right\}.
\end{equation}
Since all points lie in an ellipse of diameter $c$, we have $|p_ip_{i+1}|\leq c$, for all $i=0,\dots, n-1$.
Consequently, $P_\ell=\emptyset$ when $c\leq 2^\ell x$, or equivalently $\log(c/x)\leq \ell$.

We use a volume argument to derive an upper bound on the cardinality of $P_\ell$, for
$\ell=0,1,\dots , \lfloor\log (c/x)\rfloor$.
Assume that $p_i,p_k\in P_\ell$, and w.l.o.g., $i<k$. If $k=i+1$, then by~\eqref{eq:PL1}, $2^\ell x<|p_ip_k|$.
Otherwise,
\[
2^\ell x<|p_ip_{i+1}| < |p_ip_{i+1}|+|p_{i+1}p_k| \leq c|p_ip_k|,
\text{  or  } \frac{2^\ell x}{c} < |p_ip_k|.
\]
Consequently, the disks of radius
\begin{equation}\label{area:Rad}
R=\frac{2^\ell x}{2c} = \frac{2\cdot 2^\ell c }{\sqrt{n-1}}
\end{equation}
centered at the points in $P_{\ell}$ are interior-disjoint.
The area of each disk is $\pi R^2$. Since $P_\ell\subset U$, these disks are contained
in the $R$-neighborhood $U_R$ of the disk $U$, which is a disk of radius $\frac{c}{2}+R$
concentric with $U$.
For $\ell \leq \log(c/x)$, we have $2^\ell x\leq c$, hence $R=\frac{2^\ell x}{2c} \leq \frac{c}{2c}=\frac12 \leq \frac{c}{2}$. Thus the radius of $U_R$ is at most $c$.
Since $U_R$ contains $|P_\ell|$ interior-disjoint disks of radius $R$, we obtain
\begin{equation}\label{eq:PL2}
|P_\ell|
\leq \frac{{\rm area}(U_R)}{\pi R^2}
< \frac{\pi c^2}{\pi R^2}
=\frac{4c^4}{2^{2\ell} x^2}.
\end{equation}
For every segment $p_{i-1}p_i$ with length more than $x$, we have that $p_i\in P_\ell$,
for some $\ell\in \{0,1,\dots , \lfloor\log (c/x)\rfloor\}$.
The total length of these segments is
\begin{align*}
L_1&\leq  \sum_{\ell=0}^{\lfloor\log (c/x)\rfloor} |P_\ell| \cdot 2^{\ell+1}x
 <   \sum_{\ell=0}^{\lfloor\log (c/x)\rfloor} \frac{4c^4}{2^{2\ell} x^2} \cdot 2^{\ell+1}x
 =  \sum_{\ell=0}^{\lfloor\log (c/x)\rfloor} \frac{8 c^4}{2^\ell x}\\
&<   \frac{8 c^4}{x} \sum_{\ell=0}^\infty \frac{1}{2^\ell}
 =  \frac{16 c^4}{x}
 =   4 c^2\cdot \sqrt{n-1},
\end{align*}
as required. Together with \eqref{eq:short}, this yields $
L\leq 8c^2\cdot \sqrt{n-1}$.
\end{proof}

\section{Lower Bounds in the Plane} \label{sec:lower}

We now present our lower bound construction, showing that the dependence on $n$
for the stretch factor of a $c$-chain cannot be avoided.

\begin{theorem}\label{thm:lower-bound}
For every constant $c\geq 4$, there is a set $\P_c=\{P^m\}_{m\in\NN}$ of simple $c$-chains,
so that $P^m$ has $n=4^m+1$ vertices and stretch factor $(n-1)^{\frac{1+\log (c-2) - \log c}{2}}$.
\end{theorem}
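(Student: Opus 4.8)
The plan is to build $\P_c$ by a Koch-type recursive scheme. I would fix a \emph{generator}: a polygonal path $\gamma$ that replaces a unit segment by four edges of equal length forming a symmetric triangular bump. Placing the endpoints of $\gamma$ at $(0,0)$ and $(1,0)$ and its three interior vertices at $\left(\tfrac12-\tfrac1c,0\right)$, $\left(\tfrac12,h\right)$, and $\left(\tfrac12+\tfrac1c,0\right)$, I would choose the apex height $h=\tfrac12\sqrt{(c-4)/c}$ so that all four edges have the common length $\tfrac{c-2}{2c}$ (note $h\ge 0$ exactly when $c\ge 4$, and $c=6$ reproduces the classical Koch generator). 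Starting from a single unit segment $P^0$, I would define $P^k$ by replacing every edge of $P^{k-1}$ with a scaled, rotated, and translated copy of $\gamma$ placed in the local frame of that edge. Each replacement turns one edge into four, so $P^k$ has $4^k$ edges and $n=4^k+1$ vertices.

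Computing the stretch factor is then immediate. The generator sends a unit segment to a path of length $\lambda:=4\cdot\tfrac{c-2}{2c}=\tfrac{2(c-2)}{c}$ while keeping its two endpoints fixed, so replacing an edge of length $\ell$ produces a path of length $\lambda\ell$. Summing over all edges gives $\len(P^k)=\lambda\,\len(P^{k-1})=\lambda^k$, whereas the global endpoints $p_1,p_n$ are inherited from $P^0$ and stay at distance $1$. Hence $\delta_{P^k}=\lambda^k$, and substituting $n-1=4^k=2^{2k}$ a direct computation confirms $\lambda^k=\left(\tfrac{2(c-2)}{c}\right)^k=(n-1)^{\frac{1+\log(c-2)-\log c}{2}}$, as claimed.

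The crux, and the step I expect to be hardest, is verifying the $c$-chain inequality $|p_ip_j|+|p_jp_k|\le c\,|p_ip_k|$ for \emph{every} triple $i<j<k$. I would argue by induction on $k$, exploiting self-similarity: $P^k$ is the concatenation of four scaled copies $C_1,\dots,C_4$ of $P^{k-1}$, joined at the three level-one vertices. For a triple whose three points all lie in a single $C_m$, the bound follows from the induction hypothesis by scale invariance. The difficult cases are triples that straddle two or more of the $C_m$: here I would bound $|p_ip_k|$ from below by the separation of the relevant sub-copies, while bounding $|p_ip_j|+|p_jp_k|$ from above using the ellipse containment of Theorem~\ref{thm:linear} (every copy, being a $c$-chain, lies in a scaled ellipse whose major axis is $c$ times its chord). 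The extremal configurations are the ``apex'' triples formed by the two base vertices of a bump together with its apex, and the choice of $h$ is exactly what keeps their ratio, and every other ratio, at most $c$. Making this straddling case analysis exhaustive and scale-uniform is the main technical obstacle, since a single triple may combine vertices drawn from very different recursion depths.

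Finally I would show that each $P^k$ is simple. The natural approach is to attach to every sub-copy a bounding region (a scaled copy of $\conv$ of the whole curve, or of its bounding ellipse) and to prove inductively that consecutive regions meet only at their shared endpoint while non-consecutive regions are disjoint; simplicity of $P^k$ then follows from that of $\gamma$ together with this separation. The delicate point is that for large $c$ the bump degenerates into a thin, nearly vertical spike, so one must verify that the outward-pointing sub-bumps on adjacent edges never collide; checking that the regions stay disjoint uniformly in $c\ge 4$ is where I would invest the most care in this final part.
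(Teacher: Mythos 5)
Your construction is exactly the paper's: with $c_*=(c-2)/2$ and $a=\frac{c-2}{2c}$, the paper's generator has interior vertices $\left(\frac12-\frac1c,0\right)$, $\left(\frac12,h\right)$, $\left(\frac12+\frac1c,0\right)$ with $h=\frac12\sqrt{(c-4)/c}$, so your stretch-factor computation is correct, and your simplicity outline is essentially Lemma~\ref{lemma:simple}. The genuine gap is the $c$-chain verification (Lemma~\ref{lemma:c-chain}), which you leave as a strategy --- and the strategy as stated would fail. Consecutive sub-copies $C_m$ and $C_{m+1}$ are \emph{not} separated: they share a vertex, so ``bounding $|p_ip_k|$ from below by the separation of the relevant sub-copies'' gives nothing precisely in the critical configurations where $p_i$ and $p_k$ lie in adjacent copies near the shared vertex. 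There $|p_ip_k|$ can be arbitrarily small, while the intervening subchain (a suffix of $C_m$ followed by a prefix of $C_{m+1}$, a ``valley'' that is not similar to any $P^j$) has diameter much larger than $|p_ip_k|$; so neither induction nor an ellipse bound on the numerator closes these cases, and no fixed threshold argument can, since they occur at all scales. The paper needs two structural facts exactly for this (Lemma~\ref{lemma:technical}): (i) the vertices of the left half $g_1(P^k)\cup g_2(P^k)$ occur as every other vertex of a chain $R^k$ similar to $P^k$, which lets any triple of type Case~\ref{pf:c-chain-case1} be re-embedded as a triple of $P^k$ that is no longer of that type; and (ii) the ``top'' chain $g_5(P^k)$ around the apex is itself similar to $P^{k-1}$, which handles endpoint pairs in $C_2$ and $C_3$ near the apex by induction, supplemented by an explicit distance claim ($|p_ip_k|>\frac{c_*}{2(c_*+1)^2}$ when one endpoint leaves $g_5(P^k)$). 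Your proposal contains no substitute for either idea.

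A related inaccuracy in your sketch: the apex triples are \emph{not} extremal, and the choice of $h$ is not ``exactly'' what keeps all ratios at $c$. The apex triple has ratio $\frac{2a}{2a/c_*}=c_*=\frac{c-2}{2}$, less than half of $c$. This factor-two slack is deliberate: the straddling cases consume it, since the bounds there have the form $2(c_*+1)=c$ (for instance, numerator at most twice the diameter of $\conv(P^k)$ against $|p_ip_k|\geq\frac{1}{c_*+1}$). If one instead calibrates the bump so that the apex triple itself has ratio $c$ (a taller spike, giving a better stretch factor), it is \emph{not known} how to prove the result is a $c$-chain --- this is precisely item~2 of the paper's concluding remarks in Section~\ref{sec:con}. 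So the slack is not an artifact of the construction; without it, the case analysis you defer does not close.
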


By Theorem~\ref{thm:1/2}, the stretch factor of a $c$-chain in the plane
is $O\left((n-1)^{1/2}\right)$ for every constant $c\geq 1$.
Since
\[ \lim_{c \to\infty} \frac{1+\log (c-2) - \log c}{2} =\frac12, \]
our lower bound construction shows that the limit of the exponent cannot be improved.
Indeed, for every $\eps>0$, we can set $c=\frac{2^{2\eps+1}}{2^{2\eps}-1}$,
and then the chains above have stretch factor
\[ (n-1)^{\frac{1+\log(c-2)-\log c}{2}}=(n-1)^{1/2-\eps}=\Omega(n^{1/2-\eps}). \]

We first construct a family $\P_c=\{P^m\}_{m\in\NN}$ of polygonal chains.
Then we show, in Lemmata~\ref{lemma:simple} and~\ref{lemma:c-chain},
that every chain in $\P_c$ is simple and indeed a $c$-chain.
The theorem follows since the claimed stretch factor is a consequence of the construction.

\subparagraph*{Construction of $\P_c$.}
The construction here is a generalization of the iterative construction of the \emph{Koch curve};
when $c=6$, the result is the original Ces\`aro fractal (which is a variant of the Koch curve)~\cite{Ces05}.
We start with a unit line segment $P^0$, and for $m=0, 1, \dots$,
we construct $P^{m+1}$ by replacing each segment in $P^m$ by four segments such that
the middle three points achieve a stretch factor of $c_*=\frac{c-2}{2}$ (this choice will be justified
in the proof of Lemma~\ref{lemma:c-chain}). Note that $c_*\geq 1$, since $c\geq 4$.

We continue with the details. Let $P^0$ be the unit line segment from $(0,0)$ to $(1,0)$;
see \textsc{Fig.}~\ref{fig:p0p1}\,(left).
Given the polygonal chain $P^m$ $(m=0,1,\dots$), we construct $P^{m+1}$ by replacing
each segment of $P^m$ by four segments as follows. Consider a segment of $P^m$,
and denote its length by $\ell$. Subdivide this segment into three segments of
lengths $(\frac{1}{2}-\frac{a}{c_*})\ell$, $\frac{2a}{c_*}\ell$, and $(\frac{1}{2}-\frac{a}{c_*})\ell$,
respectively, where $0<a<\frac{c_*}{2}$ is a parameter to be determined later.
Replace the middle segment with the top part of an isosceles triangle of side length $a\ell$.
The chains $P^0$, $P^1$, $P^2$, and $P^4$ are depicted in Figures~\ref{fig:p0p1} and~\ref{fig:p2p4}.

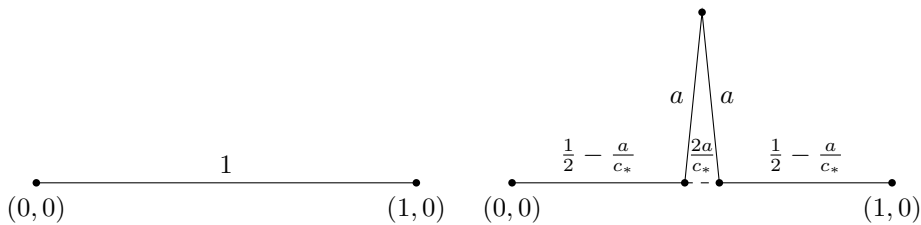
\begin{figure}[htbp!]
\centering
\begin{tikzpicture}[scale=0.5]
\draw (0.000, 0.000)node[circle, fill, inner sep=1pt, label=below:{$(0,0)$}]{} --
node[above]{1} (10.000, 0.000)node[circle, fill, inner sep=1pt, label=below:{$(1,0)$}]{};
\end{tikzpicture}
\hfill
\centering
\begin{tikzpicture}[scale=0.5]
\draw (0.000, 0.000)node[circle, fill, inner sep=1pt, label=below:{$(0,0)$}]{}
-- node[above]{$\frac{1}{2}-\frac{a}{c_*}$}
(4.545, 0.000) node[circle, fill, inner sep=1pt]{}
-- node[left]{$a$}(5.000, 4.523) node[circle, fill, inner sep=1pt]{}
-- node[right]{$a$}
(5.455, 0.000) node[circle, fill, inner sep=1pt]{}
-- node[above]{$\frac{1}{2}-\frac{a}{c_*}$}
(10.000, 0.000)node[circle, fill, inner sep=1pt, label=below:{$(1,0)$}]{} ;
\draw[dashed] (4.545, 0.000) -- node[above]{$\frac{2a}{c_*}$} (5.455, 0.000);
\end{tikzpicture}
\caption{The chains $P^0$ (left)  and $P^1$ (right).}\label{fig:p0p1}
\end{figure}

Note that each segment of length $\ell$ in $P^m$ is replaced by four segments of total
length $(1+\frac{2a(c_*-1)}{c_*})\ell$.
After $m$ iterations, the chain $P^m$ consists of $4^m$ line segments of total
length $\left(1+\frac{2a(c_*-1)}{c_*}\right)^m$.

By construction, the chain $P^m$ (for $m\geq 1$) consists of four scaled copies of $P^{m-1}$.
For $i=1,2,3,4$, let the \emph{$i$th subchain of $P^m$} be the
subchain of $P^m$ consisting of $4^{m-1}$ segments starting from the $((i-1)4^{m-1}+1)$th segment.
By construction, the $i$th subchain of $P^m$ is similar to the chain $P^{m-1}$,
for $i=1,2,3,4$.\footnote{Two geometric shapes are \emph{similar} if one can be obtained from
  the other by translation, rotation, and scaling; and are \emph{congruent} if one can be obtained
  from the other by translation and rotation.}
The following functions allow us to refer to these subchains formally.
For $i=1,2,3,4$, define a function $f^m_i: P^m\to P^m$ as the identity on the $i$th subchain of $P^m$ that
sends the remaining part(s) of $P^m$ to the closest endpoint(s) along this subchain.
So $f^m_i(P^m)$ is similar to $P^{m-1}$. Let $g_i: \P_c\setminus\{P^0\} \to\P_c$ be a piecewise defined
function such that $g_i(C)=\sigma^{-1}\circ f^m_i\circ\sigma(C)$ if $C$ is similar to $P^m$,
where $\sigma: C\to P^m$ is a similarity transformation.
Applying the function $g_i$ on a chain $P^m$ can be thought of as ``cutting out''
its $i$th subchain.

\tikzset{p4/.pic={code={
\draw (0.000, 0.000) -- (0.427, 0.000) -- (0.470, 0.425) -- (0.512,
0.000) -- (0.939, 0.000) -- (0.982, 0.425) -- (0.563, 0.510) --
(0.990, 0.510) -- (1.033, 0.934) -- (1.076, 0.510) -- (1.503, 0.510)
-- (1.084, 0.425) -- (1.127, 0.000) -- (1.554, 0.000) -- (1.597,
0.425) -- (1.639, 0.000) -- (2.066, 0.000) -- (2.109, 0.425) --
(1.690, 0.510) -- (2.117, 0.510) -- (2.160, 0.934) -- (1.742, 1.019)
-- (1.615, 0.612) -- (1.658, 1.036) -- (1.240, 1.121) -- (1.667,
1.121) -- (1.709, 1.546) -- (1.752, 1.121) -- (2.179, 1.121) --
(2.222, 1.546) -- (1.803, 1.631) -- (2.230, 1.631) -- (2.273, 2.056)
-- (2.315, 1.631) -- (2.742, 1.631) -- (2.324, 1.546) -- (2.367,
1.121) -- (2.794, 1.121) -- (2.836, 1.546) -- (2.879, 1.121) --
(3.306, 1.121) -- (2.887, 1.036) -- (2.930, 0.612) -- (2.804, 1.019)
-- (2.385, 0.934) -- (2.428, 0.510) -- (2.855, 0.510) -- (2.437,
0.425) -- (2.479, 0.000) -- (2.906, 0.000) -- (2.949, 0.425) --
(2.992, 0.000) -- (3.418, 0.000) -- (3.461, 0.425) -- (3.043, 0.510)
-- (3.470, 0.510) -- (3.512, 0.934) -- (3.555, 0.510) -- (3.982,
0.510) -- (3.564, 0.425) -- (3.606, 0.000) -- (4.033, 0.000) --
(4.076, 0.425) -- (4.119, 0.000) -- (4.545, 0.000) -- (4.588, 0.425)
-- (4.170, 0.510) -- (4.597, 0.510) -- (4.639, 0.934) -- (4.221,
1.019) -- (4.095, 0.612) -- (4.137, 1.036) -- (3.719, 1.121) --
(4.146, 1.121) -- (4.189, 1.546) -- (4.231, 1.121) -- (4.658, 1.121)
-- (4.701, 1.546) -- (4.282, 1.631) -- (4.709, 1.631) -- (4.752,
2.056) -- (4.334, 2.141) -- (4.207, 1.733) -- (4.250, 2.158) --
(3.832, 2.243) -- (3.705, 1.835) -- (4.098, 1.668) -- (3.680, 1.753)
-- (3.554, 1.346) -- (3.596, 1.770) -- (3.178, 1.855) -- (3.605,
1.855) -- (3.648, 2.280) -- (3.229, 2.365) -- (3.103, 1.957) --
(3.146, 2.382) -- (2.727, 2.467) -- (3.154, 2.467) -- (3.197, 2.892)
-- (3.240, 2.467) -- (3.666, 2.467) -- (3.709, 2.892) -- (3.291,
2.977) -- (3.718, 2.977) -- (3.760, 3.401) -- (3.803, 2.977) --
(4.230, 2.977) -- (3.812, 2.892) -- (3.854, 2.467) -- (4.281, 2.467)
-- (4.324, 2.892) -- (4.367, 2.467) -- (4.793, 2.467) -- (4.836,
2.892) -- (4.418, 2.977) -- (4.845, 2.977) -- (4.887, 3.401) --
(4.469, 3.486) -- (4.343, 3.079) -- (4.385, 3.503) -- (3.967, 3.588)
-- (4.394, 3.588) -- (4.437, 4.013) -- (4.479, 3.588) -- (4.906,
3.588) -- (4.949, 4.013) -- (4.530, 4.098) -- (4.957, 4.098) --
(5.000, 4.523) -- (5.043, 4.098) -- (5.470, 4.098) -- (5.051, 4.013)
-- (5.094, 3.588) -- (5.521, 3.588) -- (5.563, 4.013) -- (5.606,
3.588) -- (6.033, 3.588) -- (5.615, 3.503) -- (5.657, 3.079) --
(5.531, 3.486) -- (5.113, 3.401) -- (5.155, 2.977) -- (5.582, 2.977)
-- (5.164, 2.892) -- (5.207, 2.467) -- (5.633, 2.467) -- (5.676,
2.892) -- (5.719, 2.467) -- (6.146, 2.467) -- (6.188, 2.892) --
(5.770, 2.977) -- (6.197, 2.977) -- (6.240, 3.401) -- (6.282, 2.977)
-- (6.709, 2.977) -- (6.291, 2.892) -- (6.334, 2.467) -- (6.760,
2.467) -- (6.803, 2.892) -- (6.846, 2.467) -- (7.273, 2.467) --
(6.854, 2.382) -- (6.897, 1.957) -- (6.771, 2.365) -- (6.352, 2.280)
-- (6.395, 1.855) -- (6.822, 1.855) -- (6.404, 1.770) -- (6.446,
1.346) -- (6.320, 1.753) -- (5.902, 1.668) -- (6.295, 1.835) --
(6.168, 2.243) -- (5.750, 2.158) -- (5.793, 1.733) -- (5.666, 2.141)
-- (5.248, 2.056) -- (5.291, 1.631) -- (5.718, 1.631) -- (5.299,
1.546) -- (5.342, 1.121) -- (5.769, 1.121) -- (5.811, 1.546) --
(5.854, 1.121) -- (6.281, 1.121) -- (5.863, 1.036) -- (5.905, 0.612)
-- (5.779, 1.019) -- (5.361, 0.934) -- (5.403, 0.510) -- (5.830,
0.510) -- (5.412, 0.425) -- (5.455, 0.000) -- (5.881, 0.000) --
(5.924, 0.425) -- (5.967, 0.000) -- (6.394, 0.000) -- (6.436, 0.425)
-- (6.018, 0.510) -- (6.445, 0.510) -- (6.488, 0.934) -- (6.530,
0.510) -- (6.957, 0.510) -- (6.539, 0.425) -- (6.582, 0.000) --
(7.008, 0.000) -- (7.051, 0.425) -- (7.094, 0.000) -- (7.521, 0.000)
-- (7.563, 0.425) -- (7.145, 0.510) -- (7.572, 0.510) -- (7.615,
0.934) -- (7.196, 1.019) -- (7.070, 0.612) -- (7.113, 1.036) --
(6.694, 1.121) -- (7.121, 1.121) -- (7.164, 1.546) -- (7.206, 1.121)
-- (7.633, 1.121) -- (7.676, 1.546) -- (7.258, 1.631) -- (7.685,
1.631) -- (7.727, 2.056) -- (7.770, 1.631) -- (8.197, 1.631) --
(7.778, 1.546) -- (7.821, 1.121) -- (8.248, 1.121) -- (8.291, 1.546)
-- (8.333, 1.121) -- (8.760, 1.121) -- (8.342, 1.036) -- (8.385,
0.612) -- (8.258, 1.019) -- (7.840, 0.934) -- (7.883, 0.510) --
(8.310, 0.510) -- (7.891, 0.425) -- (7.934, 0.000) -- (8.361, 0.000)
-- (8.403, 0.425) -- (8.446, 0.000) -- (8.873, 0.000) -- (8.916,
0.425) -- (8.497, 0.510) -- (8.924, 0.510) -- (8.967, 0.934) --
(9.010, 0.510) -- (9.437, 0.510) -- (9.018, 0.425) -- (9.061, 0.000)
-- (9.488, 0.000) -- (9.530, 0.425) -- (9.573, 0.000) -- (10.000,
0.000);
}}}

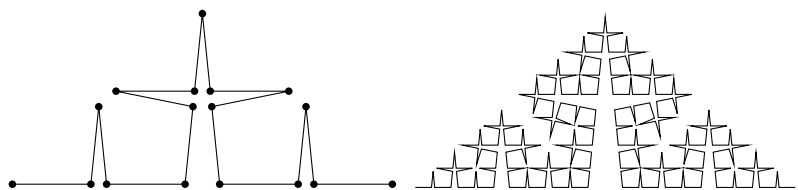
\begin{figure}[!ht]

\centering
\begin{tikzpicture}[scale=0.6]
\draw (0.000, 0.000)node[circle, fill, inner sep=1pt]{}
 -- (2.066, 0.000)node[circle, fill, inner sep=1pt]{}
 -- (2.273, 2.056)node[circle, fill, inner sep=1pt]{}
 -- (2.479, 0.000)node[circle, fill, inner sep=1pt]{}
 -- (4.545, 0.000)node[circle, fill, inner sep=1pt]{}
 -- (4.752, 2.056)node[circle, fill, inner sep=1pt]{}
 -- (2.727, 2.467)node[circle, fill, inner sep=1pt]{}
 -- (4.793, 2.467)node[circle, fill, inner sep=1pt]{}
 -- (5.000, 4.523)node[circle, fill, inner sep=1pt]{}
 -- (5.207, 2.467)node[circle, fill, inner sep=1pt]{}
 -- (7.273, 2.467)node[circle, fill, inner sep=1pt]{}
 -- (5.248, 2.056)node[circle, fill, inner sep=1pt]{}
 -- (5.455, 0.000)node[circle, fill, inner sep=1pt]{}
 -- (7.521, 0.000)node[circle, fill, inner sep=1pt]{}
 -- (7.727, 2.056)node[circle, fill, inner sep=1pt]{}
 -- (7.934, 0.000)node[circle, fill, inner sep=1pt]{}
 -- (10.000, 0.000)node[circle, fill, inner sep=1pt]{};
\end{tikzpicture}
\hfill
\begin{tikzpicture}
\path (0,0) pic[scale=0.6] {p4};
\end{tikzpicture}
\caption{The chains $P^2$ (left) and $P^4$ (right).}\label{fig:p2p4}
\end{figure}

Clearly, the stretch factor of the chain monotonically increases with the parameter $a$.
However, if $a$ is too large, the chain is no longer simple.
The following lemma gives a sufficient condition for the constructed chains to avoid self-crossings.

\begin{lemma}\label{lemma:simple}
For every constant $c\geq 4$, if $a\leq \frac{c-2}{2c}$, then every chain in $\P_c$ is simple.
\end{lemma}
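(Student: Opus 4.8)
The plan is to prove simplicity by induction on $k$, exploiting the self-similar structure already isolated by the maps $g_i$. For the inductive step I would write $P^{k+1}$ as the concatenation of its four subchains $S_1,S_2,S_3,S_4$, each being the image of $P^k$ under the similarity $\mu_i$ that carries the base segment onto the $i$th edge of the generator $P^1$. By the induction hypothesis each $S_i$ is simple, and consecutive subchains share exactly one endpoint by construction, so the whole task reduces to showing that two subchains never meet except at such a shared endpoint. I would achieve this by enclosing the entire fractal curve on a segment inside an explicit region and proving that the four images of this region under $\mu_1,\dots,\mu_4$ have pairwise disjoint interiors and meet only at the shared endpoints of consecutive subchains; this is exactly the open-set condition for the underlying iterated function system, and it forces simplicity.

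Normalize the base segment to $[0,1]\times\{0\}$, write $O=(0,0)$, and let $E=(\tfrac12,h)$ be the apex of the first bump, where $h=a\sqrt{1-1/c_*^2}$ and $c_*=\frac{c-2}{2}$; let $C,D$ be the two feet of the bump on the base. The region I would use is the closed isosceles triangle $T=\conv\{(0,0),(1,0),E\}$. Containment $P^k\subseteq T$ is itself an induction: since $S_i\subseteq\mu_i(T)$, it suffices to check $\mu_i(T)\subseteq T$ for each $i$, and because $\mu_i$ sends the base to a chord of $T$, this comes down to locating the apex $A_i$ of the $i$th subchain's bump inside $T$. For the two outer subchains this is immediate, since $A_1=s_1E$ with $s_1=\tfrac12-\tfrac{a}{c_*}>0$ is a positive multiple of $E$ and hence lies on the edge $OE$ (and $A_4$ on the mirror edge). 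For the leg subchain $S_2$ the apex $A_2$ is the image of $E$ under $\mu_2$, and a short complex-number computation shows that $A_2$ lies in $T$ precisely when $\frac{a}{c_*}+h^2\le\frac14$. Since $h^2=a^2(1-1/c_*^2)$, the left-hand side is increasing in $a$ and equals $\frac14$ exactly at $a=\frac{c_*}{c}=\frac{c-2}{2c}$; hence the hypothesis $a\le\frac{c-2}{2c}$ is exactly equivalent to $A_2\in T$, with equality placing $A_2$ on the edge $OE$ (as in Figure~\ref{fig:p2p4}).

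It then remains to prove that $T_i:=\mu_i(T)$, $i=1,\dots,4$, have pairwise disjoint interiors, with adjacent triangles meeting only at the common vertex of the corresponding subchains. The non-adjacent pairs and the pair $T_2,T_3$ are separated by the vertical line $x=\tfrac12$ together with the coordinate bounds placing $C$ left of $\tfrac12$ and $D$ right of it, so these cases are routine. The delicate case---and the place where the constraint $a\le\frac{c-2}{2c}$ is genuinely needed---is the interaction between the outer triangle $T_1$ and the adjacent leg triangle $T_2$ (and, by the reflective symmetry about $x=\tfrac12$, between $T_4$ and $T_3$): because $A_2$ may lie to the left of $C$, no vertical strip separates them, and one must instead show that $S_2$ stays on the far side of the outer edge $CA_1$ of $T_1$. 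At the extremal value $a=\frac{c-2}{2c}$ the apexes become collinear with $O$ and $E$ on the edge $OE$, and then $T_1$ and $T_2$ are two triangles hanging from the common vertex $C$ over the disjoint subsegments $OA_1$ and $A_2E$ of that edge, so their angular sectors at $C$ are disjoint and $T_1\cap T_2=\{C\}$; for smaller $a$ the apex $A_2$ retreats strictly into the interior of $T$, and I expect the same separating-line argument to go through with room to spare. I anticipate that verifying this last separation for the full range $0<a\le\frac{c-2}{2c}$, rather than only at the threshold, will be the main technical obstacle; once it is established, the induction closes and every chain in $\P_c$ is simple.
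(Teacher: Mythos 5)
Your proposal follows essentially the same route as the paper's proof: induction on $k$, decomposition of $P^{k+1}$ into four similar copies of $P^k$, the bounding triangle $T=\conv(P^1)$, and the reduction of simplicity to (a) $\mu_i(T)\subseteq T$ and (b) pairwise disjointness of the triangles $T_i=\mu_i(T)$ up to shared chain vertices. Your containment computation is also the paper's (the paper proves the slightly stronger statement $\conv(P^k)=T$, but only containment is used), and your threshold condition $\frac{a}{c_*}+h^2\le\frac14$, with equality exactly at $a=\frac{c-2}{2c}$, matches the paper's calculation of when the apex of the leg triangle stays below the supporting line of the edge $OE$. The one genuine gap is the step you flag yourself: disjointness of the adjacent pair $T_1,T_2$ is established only at the extremal value $a=\frac{c-2}{2c}$, whereas the lemma asserts simplicity for all $a\le\frac{c-2}{2c}$; ``I expect the same argument to go through'' is not a proof. (To be fair, the paper is even terser at exactly this point---it disposes of hull disjointness as an ``observation''---so you have correctly isolated where the real geometric content lies.)

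The gap closes with a one-line monotonicity remark on the sector argument you already set up. Both $T_1$ and $T_2$ are similar to $T$ and have $C$ as a base vertex, so each makes the base angle $\alpha=\arctan(2h)$ at $C$; since a triangle lies in the cone from any of its vertices, $T_1$ lies in the cone of directions $[\pi-\alpha,\pi]$ at $C$ (its base $OC$ is on the $x$-axis, its apex $A_1$ above it), and $T_2$ lies in the cone $[\beta,\beta+\alpha]$, where $\beta=\arctan\sqrt{c_*^2-1}$ is the inclination of $CE$, which is independent of $a$. Hence $T_1\cap T_2=\{C\}$ follows from the single inequality $\beta+2\alpha<\pi$. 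Its left-hand side is strictly increasing in $a$ (only $\alpha$ depends on $a$), so it suffices to check it at the threshold, which is the case you did verify: there $\tan\alpha=\frac{\sqrt{c_*^2-1}}{c_*+1}=\tan\left(\tfrac12\arccos\tfrac{1}{c_*}\right)$ and $\beta=\arccos\tfrac{1}{c_*}$, so $\beta+2\alpha=2\arccos\tfrac{1}{c_*}<\pi$. (The case $c_*=1$, i.e.\ $c=4$, is degenerate: $h=0$, every $P^k$ is the unit segment traversed monotonically, hence simple.) This upgrades your threshold-only verification to the full range $0<a\le\frac{c-2}{2c}$, and with it your induction closes.
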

\begin{proof}
  Let $T=\conv(P^1)$. Observe that $T$ is an isosceles triangle;
  see \textsc{Fig.}~\ref{fig:bounding-triangle}\,(left). We first show the following:

\begin{claim*}If $a\leq \frac{c-2}{2c}$, then $\conv(P^m) = T$
    for all $m \geq 1$.
\end{claim*}

\begin{proof}
We prove the claim by induction on $m$. It holds for $m=1$ by definition.
For the induction step, assume that $m\geq 2$ and that the claim holds for $m-1$.
Consider the chain $P^{m}$. Since it contains all the vertices of $P^1$, $T\subset \conv(P^{m})$.
So we only need to show that $\conv(P^{m})\subset T$.

\begin{figure}[htbp!]
\centering
\begin{tikzpicture}[scale=0.6]
\fill[black!15] (0.000, 0.000) -- (5.000, 4.623) -- (10.000, 0.000) --
cycle;
\draw (0.000, 0.000) -- node[above]{$\frac{1}{2}-\frac{a}{c_*}$} (4.545, 0.000)
-- node[left]{$a$} (5.000, 4.523)
-- node[right]{$a$} (5.455, 0.000) -- node[above]{$\frac{1}{2}-\frac{a}{c_*}$} (10.000, 0.000);
\draw[dashed] (4.545, 0.000) -- node[above]{$\frac{2a}{c_*}$} (5.455, 0.000);
\end{tikzpicture}
\hfill
\begin{tikzpicture}[scale=0.6]
\fill[black!15] (0.000, 0.000) -- (5.000, 4.623) -- (10.000, 0.000) -- cycle;
\fill[black!25] (0.000, 0.000) -- (2.273, 2.056) -- (4.545, 0.000) --cycle;
\fill[black!25] (4.545, 0.000) -- (2.727, 2.467) -- (5.000, 4.623) --cycle;
\fill[black!25] (5.000, 4.623) -- (7.273, 2.467) -- (5.455, 0.000) --cycle;
\fill[black!25] (5.455, 0.000) -- (7.727, 2.056) -- (10.000, 0.000) --cycle;
\draw (0.000, 0.000) -- (2.066, 0.000)
-- (2.273, 2.056) -- (2.479, 0.000) -- (4.545, 0.000) -- (4.752, 2.056)
-- (2.727, 2.467)node[circle, fill, inner sep=1pt, label=above:$p$]{}
-- (4.793, 2.467)
-- (5.000, 4.523)node[circle, fill, inner sep=1pt, label=above:$t$]{}
-- (5.207, 2.467) -- (7.273, 2.467) -- (5.248, 2.056) -- (5.455, 0.000)
-- (7.521, 0.000) -- (7.727, 2.056) -- (7.934, 0.000) -- (10.000, 0.000);
\node (label1) at (1.5, 4) {\small $a\left(\frac{1}{2}-\frac{a}{c_*}\right)$};
\draw [-latex] (label1.south) to [bend right=20] (2.211, 1.439);
\node (label2) at (0, 2) {\small $\left(\frac{1}{2}-\frac{a}{c_*}\right)^2$};
\draw [-latex] (label2.south) to [out=270, in=90] (1.1, 0.1);
\end{tikzpicture}
\caption{Left: Convex hull $T$ of $P^1$ in light gray;
 Right: Convex hulls of $g_i(P^2)$, $i=1,2,3,4$, in dark gray,
 are contained in $T$.}
\label{fig:bounding-triangle}
\end{figure}
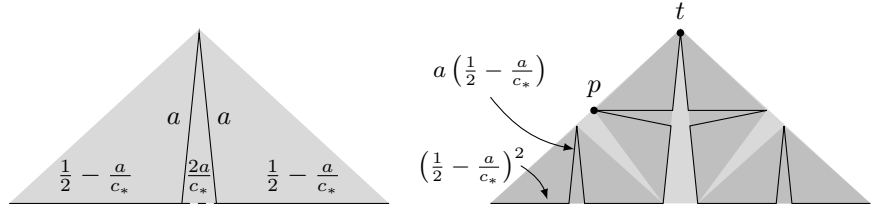

By construction, $P^{m}\subset \bigcup_{i=1}^4 \conv(g_i(P^{m}))$;
see \textsc{Fig.}~\ref{fig:bounding-triangle}\,(right).
By the inductive hypothesis, $\conv(g_i(P^{m}))$ is an isosceles triangle similar to $T$, for $i=1,2,3,4$.
Since the bases of $\conv(g_1(P^{m}))$ and $\conv(g_4(P^{m}))$ are
collinear with the base of $T$ by construction, due to similarity, they are contained in $T$.
The base of $\conv(g_2(P^{m}))$ is contained in $T$. In order to show
$\conv(g_2(P^{m}))\subset T$, by convexity, it suffices to ensure that its apex $p$ is also in $T$.
Note that the coordinates of the top point are $t=\left(1/2, a\sqrt{c_*^2-1}/c_* \right)$,
so the supporting line $\ell$ of the left side of $T$ is
\begin{align*}
y &=\frac{2a\sqrt{c_*^2-1}}{c_*}x, \text{ and} \\
p &=\left(\frac{1}{2}-\frac{a}{2c_*}-\frac{a^2\left(c_*^2-1\right)}{c_*^2},
\left(\frac{a}{2c_*}+\frac{a^2}{c_*^2}\right)\sqrt{c_*^2-1}\right).
\end{align*}
By the condition of $a\leq \frac{c-2}{2c}=\frac{c_*}{2(c_*+1)}$ in the lemma,
$p$ lies on or below $\ell$.
Under the same condition, we have $\conv(g_3(P^{m})) \subset T$ by symmetry.
Then $P^{m}\subset \bigcup_{i=1}^4 \conv(g_i(P^{m}))\subset T$.
Since $T$ is convex, $\conv(P^m)\subset T$.
So $\conv(P^m) = T$, as claimed.
\end{proof}

We can now finish the proof of Lemma~\ref{lemma:simple} by induction.
Clearly, $P^0$ and $P^1$ are simple. Assume that $m \geq 2$, and that $P^{m-1}$ is simple.
Consider the chain $P^{m}$. For $i=1,2,3,4$, $g_i(P^{m})$ is similar to $P^{m-1}$,
and hence simple by the inductive hypothesis.
Since $P^{m}=\bigcup_{i=1}^4 g_i(P^{m})$, it is sufficient to show that for all $i,j\in\{1,2,3,4\}$,
where $i\neq j$, a segment in $g_i(P^{m})$ does not intersect any segments in $g_j(P^{m})$,
unless they are consecutive in $P^{m}$ and they intersect at a common endpoint.
This follows from the above claim together with the observation that for $i\neq j$, the intersection
$g_i(P^{m})\cap g_j(P^{m})$ is either empty or contains a single vertex which is the common
endpoint of two consecutive segments in $P^{m}$.
\end{proof}
In the remainder of this section, we assume that
\begin{equation}
a=\frac{c-2}{2c}=\frac{c_*}{2(c_*+1)}.
\end{equation}
Under this assumption, all segments in $P^1$ have the same length $a$.
Therefore, by construction, all segments in $P^m$ have the same length
\[
a^m=\left(\frac{c_*}{2(c_*+1)}\right)^m.
\]
There are $4^m$ segments in $P^m$, with $4^m+1$ vertices, and its stretch factor is
\[\delta_{P^m} = 4^m\left(\frac{c_*}{2(c_*+1)}\right)^m=\left(\frac{2c_*}{c_*+1}\right)^m.\]
Consequently, $m=\log_4 (n-1) = \frac{\log (n-1)}{2}$, and
\[
\delta_{P^m} =\left(\frac{2c_*}{c_*+1}\right)^{\frac{\log (n-1)}{2}}
=\left(\frac{2c-4}{c}\right)^{\frac{\log (n-1)}{2}}
=(n-1)^{\frac{1+\log (c-2) - \log c}{2}},
\]
as claimed.
To finish the proof of Theorem~\ref{thm:lower-bound}, it remains to show the
constructed polygonal chains are indeed
$c$-chains.

\begin{lemma}\label{lemma:c-chain}
For every constant $c\geq 4$, $\P_c$ is a family of $c$-chains.
\end{lemma}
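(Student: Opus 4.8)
The plan is to argue by exploiting that the $c$-chain property is invariant under similarities: translations, rotations, and uniform scalings multiply every pairwise distance by a common factor, so any chain similar to a $c$-chain is again a $c$-chain. This lets me use the self-similar structure of $\P_c$: each $P^k$ is the union of its four subchains, every one similar to $P^{k-1}$, and the five junction vertices $q_0,\dots,q_4$ at which consecutive subchains meet themselves form a scaled copy of $P^1$. Given a triple $p_i,p_j,p_k$ ($i<j<k$) of a chain in $\P_c$, I would pass to the smallest subchain copy $C$ (a scaled $P^m$) that contains both $p_i$ and $p_k$; since subchains are contiguous, $p_j\in C$ too, and minimality forces $p_i$ and $p_k$ into two \emph{distinct} immediate subchains of $C$. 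By similarity-invariance it suffices to verify $|p_ip_j|+|p_jp_k|\le c\,|p_ip_k|$ inside $C$, so I may assume $C=P^m$ has unit endpoints.

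Two geometric facts dispatch the bulk of the cases. First, Lemma~\ref{lemma:simple} gives $P^m\subset T=\conv(P^1)$, and since the apex height $h=\tfrac12\sqrt{(c_*-1)/(c_*+1)}<\tfrac12$ for every $c\ge 4$, the base of $T$ is its unique longest side; hence $\diam(P^m)$ equals the endpoint distance, and every intermediate point obeys $|p_ip_j|+|p_jp_k|\le 2\,\diam(P^m)$. Second, the four subchains occupy four sub-triangles similar to $T$, and a direct check shows that the minimum distance between two \emph{non-adjacent} subchains is attained near $q_1$ and $q_3$ and equals $|q_1q_3|=1-2a=2/c$ times the diameter. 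Combining the two, whenever $p_i,p_k$ lie in non-adjacent subchains I get $|p_ip_j|+|p_jp_k|\le 2\,\diam(C)\le c\,|p_ip_k|$, exactly the bound required.

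The remaining, and genuinely delicate, case is when $p_i,p_k$ lie in \emph{adjacent} subchains sharing a junction $J$: now the two subchains touch, $|p_ip_k|$ may be arbitrarily small, and the diameter estimate is useless. Here I would study the self-similar ``peak'' formed at $J$, running an induction on the level of the peak. The key structural observation is that the last sub-subchain of one arm and the first sub-subchain of the other meet at $J$ with the \emph{same} opening angle as the two arms themselves, so the innermost pair at the tip is a scaled copy of the whole peak one level down; configurations with $p_i,p_k$ both in this innermost pair are handled by the inductive hypothesis, while any configuration that uses an outer sub-subchain is bounded away from $J$, making $|p_ip_k|$ comparable to the local diameter so that the crude bound of the previous paragraph applies. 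The base case is a two-segment peak, whose ratio is $1/\sin(\theta/2)$ for the opening half-angle $\theta/2$; the sharpest junction is the apex $q_2$, with half-angle exactly $\arcsin(1/c_*)$, so a symmetric straddle there (the triple $q_1,q_2,q_3$) realizes the ratio $c_*$ precisely. Since $a=\tfrac{c-2}{2c}$ is the largest value compatible with simplicity---it is the exact threshold of Lemma~\ref{lemma:simple}---the induced $c_*=\tfrac{c-2}{2}$ satisfies $c=2c_*+2$, leaving enough slack that all these inequalities hold comfortably; this is where the choice of $c_*$ gets justified.

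I expect the adjacent-junction case to be the main obstacle: unlike the non-adjacent case there is no uniform lower bound on $|p_ip_k|$, so the estimate must be scale-invariant and must exploit the exact junction geometry rather than mere containment in $T$. The cleanest way to make the peak recursion rigorous is probably to phrase the goal through the ellipse reformulation---requiring every intermediate vertex of the subchain between $p_i$ and $p_k$ to lie in the ellipse with foci $p_i,p_k$ and major axis $c\,|p_ip_k|$---and then reduce, via the self-similarity of the peak, to finitely many extremal configurations at a single scale, each checked against the budget $c=2c_*+2$.
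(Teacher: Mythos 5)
Your overall architecture is reasonable, and several pieces are sound: reducing to the smallest similar copy containing $p_i,p_k$ (equivalent to the paper's induction on $k$), the similarity-invariance of the $c$-chain condition, the non-adjacent case via $|p_ip_j|+|p_jp_k|\le 2\,\diam$ against the separation $1-2a=2/c$ (this matches the paper's Cases~2--3), and the structural observation that the two quarter-arms flanking a junction form a scaled copy of the peak one level down. The gap is in the other half of your adjacent-case dichotomy: the ``outer'' case cannot be closed by a diameter-over-distance bound in the range of $c$ claimed. Concretely, at the apex junction let $p_i$ be a vertex of the third quarter of $g_2(P^m)$ lying just before the junction with the fourth quarter, and let $p_k\in g_3(P^m)$ lie near the apex. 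Then $p_i$ is outside your innermost pair, yet $|p_ip_k|$ can be as small as roughly $a^2$ (where $a=\frac{c_*}{2(c_*+1)}$), while $\diam\left(g_2(P^m)\cup g_3(P^m)\right)=a$. Your estimate therefore yields only $\frac{|p_ip_j|+|p_jp_k|}{|p_ip_k|}\le \frac{2a}{a^2}=\frac{4(c_*+1)}{c_*}$, which exceeds $c=2(c_*+1)$ whenever $c_*<2$, i.e., for all $4\le c<6$. At the side junctions it is worse: the hull of $g_2(P^m)$ is a triangle whose apex leans back over $g_1(P^m)$, and the peak $g_1(P^m)\cup g_2(P^m)$ contains both $p_1$ and the apex of $\conv(P^m)$, hence has diameter $\sqrt{a}$, while the outer region again comes within about $a^2$ of the other arm; the crude bound then requires $a^{3/2}\ge\frac{1}{c_*+1}$, equivalently $c_*^3\ge 8(c_*+1)$, which fails for all $c\le 8$. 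So for $c$ near $4$ no estimate of the form ``sum $\le 2\,\diam$, distance $\ge$ fixed fraction of $\diam$'' can finish the adjacent case; the configurations above witness that your bound is too weak, not that the lemma is false.

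This is exactly where the paper's two technical devices enter, and your proposal has no substitute for them. At the apex the paper splits into \emph{halves}, not quarters: the union $g_5(P^m)$ of the inner halves of $g_2(P^m)$ and $g_3(P^m)$ is similar to the \emph{entire chain} $P^{m-1}$ (Lemma~\ref{lemma:technical}\,(\ref{pf:f2})) --- an alignment special to the apex --- so the chain-level induction hypothesis applies to it directly, and the complementary outer region is at distance at least $\frac{c_*}{2(c_*+1)^2}=\frac{2a}{c}$ from the other arm, which is exactly the threshold at which the diameter bound gives $c$. At the side junctions the paper makes no distance estimate at all: by Lemma~\ref{lemma:technical}\,(\ref{pf:f1}) the vertices of $g_1(P^m)\cup g_2(P^m)$ occur as alternate vertices of a chain $R^m$ similar to $P^m$, so any triple straddling that junction is mapped, with its ratio preserved, to a triple straddling the midpoint of $P^m$, i.e., into the cases already handled. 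Your peak recursion (and the concluding ellipse reformulation, which restates the problem but does not supply new estimates) would have to be supplemented by these or equivalent ideas; as written, it establishes the lemma at best for large $c$, not for all $c\ge 4$.
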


We first prove a couple of facts that will be useful in the proof of Lemma~\ref{lemma:c-chain}.
We defer an intuitive explanation until after the formal statement of the
following lemma.
\begin{lemma}\label{lemma:technical}
Let $m \geq 1$ and let $P^m=(p_1, p_2, \dots, p_n)$, where $n=4^m+1$.
Then the following hold:
\begin{enumerate}[label=(\roman*), ref=\roman*]
\item\label{pf:f1}
There exists a sequence $(q_1, q_2, \dots, q_{\ell})$ of $\ell=2\cdot 4^{m-1}$ points in $\RR^2$ such
that the chain $R^m=(p_1, q_1, p_2, q_2, \dots, p_{\ell}, q_{\ell}, p_{\ell+1})$ is similar to $P^m$.
\item\label{pf:f2}
  For $m\geq 2$, define
  $g_5: \P_c\setminus\{P^0, P^1\} \to \P_c$ by
\[
g_5(P^m) = \left(g_3\circ g_2(P^m)\right) \cup \left(g_4\circ g_2(P^m)\right) \cup
\left(g_1\circ g_3(P^m)\right) \cup \left(g_2\circ g_3(P^m)\right).
\]
Then $g_5(P^m)$ is similar to $P^{m-1}$.
\end{enumerate}
\end{lemma}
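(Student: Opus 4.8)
The plan is to read both parts as self-similarity identities for this Koch-type chain and to prove them by exhibiting the relevant similarity transformations explicitly. Throughout I use that under the standing assumption $a=\frac{c-2}{2c}$ every edge of $P^k$ has the common length $a^k$, so the four subchains $g_1(P^k),\dots,g_4(P^k)$ are all congruent to $P^{k-1}$ scaled by the same factor $a$. Write $v_0=(0,0)$, $v_1=(a,0)$, $v_2=t=(1/2,h)$ with $h=a\sqrt{c_*^2-1}/c_*$, $v_3=(1-a,0)$, $v_4=(1,0)$ for the five vertices of $P^1$. The single numeric fact driving every base case is that the assumption forces $a/c_*=\frac12-a$, whence $h^2=a^2-(a/c_*)^2=a-\frac14$; in particular $|v_0v_2|^2=\frac14+h^2=a$, so $p_1=v_0$ and the apex $v_2$ sit at distance $\sqrt a$.

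For part (i), note that the first half $(p_1,\dots,p_{m+1})$ of $P^k$ runs from $p_1=v_0$ to $p_{m+1}=v_2$ (the endpoint of the second subchain, $m=2\cdot4^{k-1}$). Identify the plane with $\mathbb{C}$ so that the endpoints of $P^k$ are $0$ and $1$, and let $\sigma(z)=(\tfrac12+ih)\,\bar z$; this is the orientation-reversing similarity of ratio $|\tfrac12+ih|=\sqrt a$ with $\sigma(0)=0=p_1$ and $\sigma(1)=\tfrac12+ih=v_2=p_{m+1}$, and one checks $\sigma(v_2)=(\tfrac12+ih)(\tfrac12-ih)=a=v_1$. (The reflection is harmless because $P^k$ is symmetric about the vertical line through its apex.) Writing $P^k=(u_0,u_1,\dots,u_{4^k})$, I set $q_i:=\sigma(u_{2i-1})$, so that $R^k=\sigma(P^k)$ is similar to $P^k$ by construction, and it remains only to prove $\sigma(u_{2i})=p_{i+1}$ for $0\le i\le m$, i.e.\ that the even-indexed vertices of $\sigma(P^k)$ are exactly the first-half vertices. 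I would prove this by induction on $k$: the base case $k=1$ is the coordinate check above (this is exactly where $h^2=a-\tfrac14$ enters), and for the step I use that $\sigma$ carries the four-copy decomposition of $P^k$ to a compatible decomposition of $R^k$, so matching it against the refinement of the first half reduces the even-index claim for $k$ to the same claim at level $k-1$.

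For part (ii), I first identify $g_5(P^k)$ geometrically. The subchains $g_2(P^k)$ and $g_3(P^k)$ are precisely the two pieces of $P^k$ meeting at the apex $v_2$, and the four terms of $g_5(P^k)$ are the third and fourth sub-subchains of $g_2(P^k)$ followed by the first and second sub-subchains of $g_3(P^k)$; thus $g_5(P^k)$ is the portion of $P^k$ straddling $v_2$, namely the second half of $g_2(P^k)$ concatenated with the first half of $g_3(P^k)$. It consists of four consecutive copies of $P^{k-2}$, hence $4^{k-1}$ edges of length $a^k$, matching $P^{k-1}$ up to the scale factor $a$; it remains to show these four copies lie along a (reflected, rotated, scaled) $P^1$-pattern with apex at $v_2$. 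The crucial junction is at $v_2$: the piece arriving at $v_2$ is the fourth sub-subchain of $g_2(P^k)$, whose baseline is parallel to the segment $v_1v_2$, and the piece leaving $v_2$ is the first sub-subchain of $g_3(P^k)$, whose baseline is parallel to $v_2v_3$, so the turn at $v_2$ equals the apex angle of $P^1$. The two remaining interior turns occur inside $g_2(P^k)$ and $g_3(P^k)$ and equal the $P^1$-turns at $v_3$ and $v_1$ respectively; since $P^k$ is symmetric about the vertical through $v_2$ (which swaps $g_2$ and $g_3$), these two junctions are mirror images of each other, and together with the equal edge lengths they assemble the four copies into a copy of the $P^1$-pattern, giving $g_5(P^k)\sim P^{k-1}$.

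In both parts the real work is the similarity and orientation bookkeeping through the recursion, not any single estimate: I must ensure the explicit maps commute with the four-fold self-similar subdivision at every level, and that the reflections forced by the alternating apex bumps are absorbed by the symmetry of $P^k$ about the vertical through its apex. Concretely, the delicate step in part (i) is matching the inductively produced sub-apexes with the even-indexed vertices of $\sigma(P^k)$, and in part (ii) it is confirming that all three interior turning angles of the straddling chain coincide with those of a $P^1$-pattern; both reduce, in the end, to the single identity $h^2=a-\tfrac14$ enforced by the standing choice of $a$.
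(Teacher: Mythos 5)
Your proposal is correct and takes essentially the same route as the paper: your explicit map $\sigma(z)=\left(\tfrac12+ih\right)\bar z$ of ratio $\sqrt a$ is precisely the paper's similarity $\Delta p_1p_3p_5\to\Delta p_1p_2p_3$ written in complex coordinates (with your identity $h^2=a-\tfrac14$ playing the role of the similar-isosceles-triangle relation), your induction through the four-fold decomposition (with the mirror symmetry absorbing reflections) matches the paper's coupled inductive construction of $R^k$, and your angle/parallelism argument for the straddling chain $g_5(P^k)$ is the paper's collinearity-and-angle argument, done directly for general $k$ rather than at $k=2$ and lifted. The inductive matching step you flag as delicate is left at the same level of detail in the paper itself, so there is no gap beyond what the original proof also elides.
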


Part~(\ref{pf:f1}) of Lemma~\ref{lemma:technical} says that given $P^m$,
we can construct a chain $R^m$ similar to $P^m$
by inserting one point between every two consecutive points of the left half of $P^m$,
see \textsc{Fig.}~\ref{fig:technical}\,(left).
Part~(\ref{pf:f2}) says that the ``top'' subchain of $P^m$ that consists of the right half of $g_2(P^m)$
and the left half of $g_3(P^m)$, see \textsc{Fig.}~\ref{fig:technical}\,(right), is similar to $P^{m-1}$.

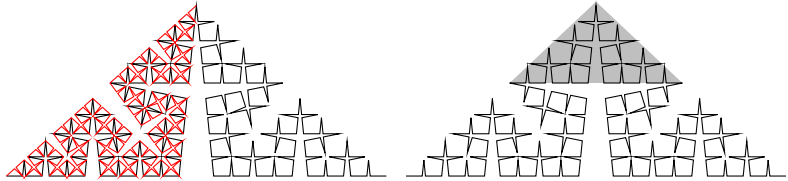
\begin{figure}[ht!]
\centering
\begin{tikzpicture}[scale=0.6]
\path (0,0) pic[scale=0.6] {p4};
\draw[red] (0.000, 0.000) -- (0.213, 0.193) -- (0.427, 0.000) -- (0.256, 0.232) -- (0.470, 0.425) -- (0.683, 0.232) -- (0.512, 0.000) -- (0.726, 0.193) -- (0.939, 0.000) -- (0.768, 0.232) -- (0.982, 0.425) -- (0.734, 0.278) -- (0.563, 0.510) -- (0.777, 0.703) -- (0.990, 0.510) -- (0.820, 0.741) -- (1.033, 0.934) -- (1.246, 0.741) -- (1.076, 0.510) -- (1.289, 0.703) -- (1.503, 0.510) -- (1.332, 0.278) -- (1.084, 0.425) -- (1.298, 0.232) -- (1.127, 0.000) -- (1.340, 0.193) -- (1.554, 0.000) -- (1.383, 0.232) -- (1.597, 0.425) -- (1.810, 0.232) -- (1.639, 0.000) -- (1.853, 0.193) -- (2.066, 0.000) -- (1.895, 0.232) -- (2.109, 0.425) -- (1.861, 0.278) -- (1.690, 0.510) -- (1.904, 0.703) -- (2.117, 0.510) -- (1.947, 0.741) -- (2.160, 0.934) -- (1.912, 0.788) -- (1.742, 1.019) -- (1.863, 0.758) -- (1.615, 0.612) -- (1.445, 0.843) -- (1.658, 1.036) -- (1.410, 0.890) -- (1.240, 1.121) -- (1.453, 1.314) -- (1.667, 1.121) -- (1.496, 1.353) -- (1.709, 1.546) -- (1.923, 1.353) -- (1.752, 1.121) -- (1.965, 1.314) -- (2.179, 1.121) -- (2.008, 1.353) -- (2.222, 1.546) -- (1.974, 1.399) -- (1.803, 1.631) -- (2.017, 1.824) -- (2.230, 1.631) -- (2.059, 1.863) -- (2.273, 2.056) -- (2.486, 1.863) -- (2.315, 1.631) -- (2.529, 1.824) -- (2.742, 1.631) -- (2.572, 1.399) -- (2.324, 1.546) -- (2.537, 1.353) -- (2.367, 1.121) -- (2.580, 1.314) -- (2.794, 1.121) -- (2.623, 1.353) -- (2.836, 1.546) -- (3.050, 1.353) -- (2.879, 1.121) -- (3.092, 1.314) -- (3.306, 1.121) -- (3.135, 0.890) -- (2.887, 1.036) -- (3.101, 0.843) -- (2.930, 0.612) -- (2.683, 0.758) -- (2.804, 1.019) -- (2.633, 0.788) -- (2.385, 0.934) -- (2.599, 0.741) -- (2.428, 0.510) -- (2.642, 0.703) -- (2.855, 0.510) -- (2.684, 0.278) -- (2.437, 0.425) -- (2.650, 0.232) -- (2.479, 0.000) -- (2.693, 0.193) -- (2.906, 0.000) -- (2.735, 0.232) -- (2.949, 0.425) -- (3.162, 0.232) -- (2.992, 0.000) -- (3.205, 0.193) -- (3.418, 0.000) -- (3.248, 0.232) -- (3.461, 0.425) -- (3.214, 0.278) -- (3.043, 0.510) -- (3.256, 0.703) -- (3.470, 0.510) -- (3.299, 0.741) -- (3.512, 0.934) -- (3.726, 0.741) -- (3.555, 0.510) -- (3.769, 0.703) -- (3.982, 0.510) -- (3.811, 0.278) -- (3.564, 0.425) -- (3.777, 0.232) -- (3.606, 0.000) -- (3.820, 0.193) -- (4.033, 0.000) -- (3.862, 0.232) -- (4.076, 0.425) -- (4.289, 0.232) -- (4.119, 0.000) -- (4.332, 0.193) -- (4.545, 0.000) -- (4.375, 0.232) -- (4.588, 0.425) -- (4.341, 0.278) -- (4.170, 0.510) -- (4.383, 0.703) -- (4.597, 0.510) -- (4.426, 0.741) -- (4.639, 0.934) -- (4.392, 0.788) -- (4.221, 1.019) -- (4.342, 0.758) -- (4.095, 0.612) -- (3.924, 0.843) -- (4.137, 1.036) -- (3.890, 0.890) -- (3.719, 1.121) -- (3.932, 1.314) -- (4.146, 1.121) -- (3.975, 1.353) -- (4.189, 1.546) -- (4.402, 1.353) -- (4.231, 1.121) -- (4.445, 1.314) -- (4.658, 1.121) -- (4.487, 1.353) -- (4.701, 1.546) -- (4.453, 1.399) -- (4.282, 1.631) -- (4.496, 1.824) -- (4.709, 1.631) -- (4.539, 1.863) -- (4.752, 2.056) -- (4.504, 1.909) -- (4.334, 2.141) -- (4.455, 1.880) -- (4.207, 1.733) -- (4.037, 1.965) -- (4.250, 2.158) -- (4.002, 2.011) -- (3.832, 2.243) -- (3.953, 1.982) -- (3.705, 1.835) -- (3.977, 1.929) -- (4.098, 1.668) -- (3.851, 1.522) -- (3.680, 1.753) -- (3.801, 1.492) -- (3.554, 1.346) -- (3.383, 1.577) -- (3.596, 1.770) -- (3.349, 1.624) -- (3.178, 1.855) -- (3.392, 2.048) -- (3.605, 1.855) -- (3.434, 2.087) -- (3.648, 2.280) -- (3.400, 2.133) -- (3.229, 2.365) -- (3.351, 2.104) -- (3.103, 1.957) -- (2.932, 2.189) -- (3.146, 2.382) -- (2.898, 2.235) -- (2.727, 2.467) -- (2.941, 2.660) -- (3.154, 2.467) -- (2.983, 2.699) -- (3.197, 2.892) -- (3.410, 2.699) -- (3.240, 2.467) -- (3.453, 2.660) -- (3.666, 2.467) -- (3.496, 2.699) -- (3.709, 2.892) -- (3.462, 2.745) -- (3.291, 2.977) -- (3.504, 3.170) -- (3.718, 2.977) -- (3.547, 3.208) -- (3.760, 3.401) -- (3.974, 3.208) -- (3.803, 2.977) -- (4.016, 3.170) -- (4.230, 2.977) -- (4.059, 2.745) -- (3.812, 2.892) -- (4.025, 2.699) -- (3.854, 2.467) -- (4.068, 2.660) -- (4.281, 2.467) -- (4.110, 2.699) -- (4.324, 2.892) -- (4.537, 2.699) -- (4.367, 2.467) -- (4.580, 2.660) -- (4.793, 2.467) -- (4.623, 2.699) -- (4.836, 2.892) -- (4.588, 2.745) -- (4.418, 2.977) -- (4.631, 3.170) -- (4.845, 2.977) -- (4.674, 3.208) -- (4.887, 3.401) -- (4.640, 3.255) -- (4.469, 3.486) -- (4.590, 3.225) -- (4.343, 3.079) -- (4.172, 3.310) -- (4.385, 3.503) -- (4.138, 3.357) -- (3.967, 3.588) -- (4.180, 3.781) -- (4.394, 3.588) -- (4.223, 3.820) -- (4.437, 4.013) -- (4.650, 3.820) -- (4.479, 3.588) -- (4.693, 3.781) -- (4.906, 3.588) -- (4.735, 3.820) -- (4.949, 4.013) -- (4.701, 3.866) -- (4.530, 4.098) -- (4.744, 4.291) -- (4.957, 4.098) -- (4.787, 4.330) -- (5.000, 4.523);

\end{tikzpicture}
\hfill
\begin{tikzpicture}[scale=0.6]
\fill[black!25] (2.727, 2.467) -- (5.000, 4.623) -- (7.273, 2.467) --cycle;
\path (0,0) pic[scale=0.6] {p4};
\end{tikzpicture}
\caption{Left: Chain $P^m$ with the scaled copy of itself $R^m$ (in red);
Right: Chain $P^m$ with its subchain $g_5(P^m)$ marked by its
convex hull.}\label{fig:technical}
\end{figure}

\begin{proof}[Proof of Lemma~\ref{lemma:technical}]
For part~(\ref{pf:f1}), we review the construction of $P^m$, and show that $R^m$ and $P^m$ can be
constructed in a coupled manner.
In \textsc{Fig.}~\ref{fig:induced-chain}\,(left), consider $P^1=(p_1, p_2, p_3, p_4, p_5)$.
Recall that all segments in $P^1$ are of the same length $a=\frac{c_*}{2(c_*+1)}$.
The isosceles triangles $\Delta p_1p_2p_3$ and $\Delta p_1p_3p_5$ are similar.
Let $\sigma: \Delta p_1p_3p_5 \to \Delta p_1p_2p_3$ be the similarity transformation.
Let $q_1=\sigma(p_2)$ and $q_2 = \sigma(p_4)$.
By construction, the chain $R^1 = (p_1, q_1, p_2, q_2, p_3)$ is similar to $P^1$.
In particular, all of its segments have the same length,
and so the isosceles triangle $\Delta p_1q_1p_2$ is similar to $\Delta p_1p_3p_5$.
Moreover, its base is the segment $p_1p_2$, so $\Delta p_1q_1p_2$ is precisely
$\conv(g_1(P^2))$, see \textsc{Fig.}~\ref{fig:induced-chain}\,(right).

\begin{figure}[!ht]
\centering
\begin{tikzpicture}[scale=0.6]
\draw (0.000, 0.000)node[circle, fill, inner sep=1pt, label=above:$p_1$]{}
-- (4.545, 0.000)node[circle, fill, inner sep=1pt, label=85:$p_2$]{}
-- (5.000, 4.523)node[circle, fill, inner sep=1pt, label=left:$p_3$]{}
-- (5.455, 0.000)node[circle, fill, inner sep=1pt, label=45:$p_4$]{}
-- (10.000, 0.000)node[circle, fill, inner sep=1pt, label=93:$p_5$]{};
\draw[red] (0.000, 0.000)
-- (2.273, 2.056)node[circle, fill, inner sep=1pt, label=left:$q_1$]{}
-- (4.545, 0.000)
-- (2.727, 2.467)node[circle, fill, inner sep=1pt, label=above:$q_2$]{}
-- (5.000, 4.523);
\end{tikzpicture}
\hfill
\begin{tikzpicture}[scale=0.6]
\draw[red] (0.000, 0.000)
-- (2.273, 2.056) -- (4.545, 0.000) -- (2.727, 2.467) -- (5.000, 4.523);
\draw (0.000, 0.000)node[circle, fill, inner sep=1pt, label=above:$v_1$]{}
-- (2.066, 0.000)node[circle, fill, inner sep=1pt, label=135:$v_2$]{}
-- (2.273, 2.056)node[circle, fill, inner sep=1pt, label=left:$v_{3}$]{}
-- (2.479, 0.000)node[circle, fill, inner sep=1pt, label=45:$v_{4}$]{}
-- (4.545, 0.000)node[circle, fill, inner sep=1pt, label=85:$v_{5}$]{}
-- (4.752, 2.056)node[circle, fill, inner sep=1pt, label=225:$v_{6}$]{}
-- (2.727, 2.467)node[circle, fill, inner sep=1pt, label=above:$v_{7}$]{}
-- (4.793, 2.467)node[circle, fill, inner sep=1pt, label=135:$v_{8}$]{}
-- (5.000, 4.523)node[circle, fill, inner sep=1pt, label=left:$v_{9}$]{}
-- (5.207, 2.467)node[circle, fill, inner sep=1pt, label=45:$v_{10}$]{}
-- (7.273, 2.467)node[circle, fill, inner sep=1pt, label=above:$v_{11}$]{}
-- (5.248, 2.056) -- (5.455, 0.000)
-- (7.521, 0.000) -- (7.727, 2.056) -- (7.934, 0.000)
-- (10.000, 0.000)node[circle, fill, inner sep=1pt, label=93:$v_{17}$]{};
\end{tikzpicture}
\caption{Left: the chains $P^1$ and $R^1$ (red);
Right: the chains $P^2$ and $R^1$ (red).}\label{fig:induced-chain}
\end{figure}
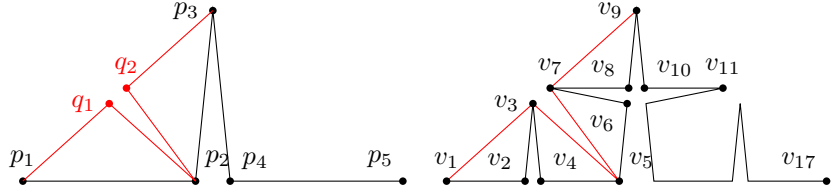

Write $P^2=(v_{1}, v_{2}, \dots, v_{17})$, then $v_{3}=q_1$ by the above argument
and $v_{7}=q_2$ by symmetry.
Now $\Delta v_{1}v_{2}v_{3}$, $\Delta v_{3}v_{4}v_{5}$, $\Delta v_{5}v_{6}v_{7}$, and
$\Delta v_{7}v_{8}v_{9}$ are four congruent isosceles triangles, all of which are
similar to $\Delta v_{1}v_{9}v_{17}$, since the angles are the same.
Repeat the above procedure on each of them to obtain
$R^2=(v_{1}, u_1, v_{2}, u_2, \dots, v_{8}, u_8, v_{9})$, which is similar to $P^2$.
Continue this construction inductively to get the desired chain $R^m$ for any
$m\geq 1$.

For part~(\ref{pf:f2}), see \textsc{Fig.}~\ref{fig:induced-chain}\,(right).
By definition, $g_5(P^2)$ is the subchain $(v_{7}, v_{8}, v_{9}, $ $v_{10}, v_{11})$.
Observe that the segments $v_{7}v_{8}$ and $v_{10}v_{11}$ are collinear by symmetry.
Moreover, they are parallel to $v_{1}v_{17}$ since
$\angle v_{7}v_{8}v_{9} = \angle v_{1}v_{5}v_{9}$.
So $g_5(P^2)$ is similar to $P^1$; see \textsc{Fig.}~\ref{fig:induced-chain}\,(left).
Then for $m\geq 2$, $g_5(P^m)$ is the subchain of $P^m$ starting at vertex $v_{7}$,
ending at vertex $v_{11}$.
By the construction of $P^m$, $g_5(P^m)$ is similar to $P^{m-1}$.
\end{proof}

\begin{proof}[Proof of Lemma~\ref{lemma:c-chain}]
We proceed by induction on $m$ again.
The claim is vacuously true for $P^0$.
For $P^1$, among all ten choices of $1\leq i<j<k\leq 5$,
$\frac{|p_2p_3|+|p_3p_4|}{|p_2p_4|}=c_*=\frac{c-2}{2}<c$ is the largest,
and so $P^1$ is also a $c$-chain.
Assume that $m\geq 2$ and $P^{m-1}$ is a $c$-chain.
We need to show that $P^m$ is also a $c$-chain.
Consider a triplet of vertices $\{p_i, p_j, p_k\}\subset P^m$, where $1\leq i< j< k\leq n=4^m+1$.

Recall that $P^m$ consists of four copies of the subchain $P^{m-1}$, namely $g_1(P^m)$, $g_2(P^m)$,
$g_3(P^m)$, and $g_4(P^m)$, see \textsc{Fig.}~\ref{fig:c-chain}\,(left).
If $\{p_i, p_j, p_k\}\subset g_l(P^{m})$ for any $l=1,2,3,4$, then by the induction hypothesis,
\[
\frac{|p_ip_j|+|p_jp_k|}{|p_ip_k|}\leq c.
\]
So we may assume that $p_i$ and $p_k$ belong to two different $g_l(P^m)$'s.
There are four cases to consider up to symmetry:
\begin{enumerate}[leftmargin=17mm,label=Case \arabic*., ref=Case~\arabic*]
\item\label{pf:c-chain-case1} $p_i\in g_1(P^m)$ and $p_k\in g_2(P^m)$;
\item $p_i\in g_1(P^m)$ and $p_k\in g_3(P^m)$;
\item $p_i\in g_1(P^m)$ and $p_k\in g_4(P^m)$;
\item\label{pf:c-chain-case4} $p_i\in g_2(P^m)$ and $p_k\in g_3(P^m)$.
\end{enumerate}

\begin{figure}[htbp!]
 \centering
\begin{tikzpicture}[scale=0.6]
\fill[black!25] (0.000, 0.000) -- (2.273, 2.056) -- (4.545, 0.000) --cycle;
\fill[black!25] (4.545, 0.000) -- (2.727, 2.467) -- (5.000, 4.623) --cycle;
\fill[black!25] (5.000, 4.623) -- (7.273, 2.467) -- (5.455, 0.000) --cycle;
\fill[black!25] (5.455, 0.000) -- (7.727, 2.056) -- (10.000, 0.000) --cycle;
\draw[dashed] (4.545, 0.000) -- node[below]{$\frac{1}{c_*+1}$} (5.455, 0.000);
\path (0,0) pic[scale=0.6] {p4};
\end{tikzpicture}
\hfill
\begin{tikzpicture}[scale=0.6]
\path (0,0) pic[scale=0.6] {p4};
\draw[red] (0.000, 0.000) -- (0.213, 0.193) -- (0.427, 0.000) -- (0.256, 0.232) -- (0.470, 0.425) -- (0.683, 0.232) -- (0.512, 0.000) -- (0.726, 0.193) -- (0.939, 0.000) -- (0.768, 0.232) -- (0.982, 0.425) -- (0.734, 0.278) -- (0.563, 0.510) -- (0.777, 0.703) -- (0.990, 0.510) -- (0.820, 0.741) -- (1.033, 0.934) -- (1.246, 0.741) -- (1.076, 0.510) -- (1.289, 0.703) -- (1.503, 0.510) -- (1.332, 0.278) -- (1.084, 0.425) -- (1.298, 0.232) -- (1.127, 0.000) -- (1.340, 0.193) -- (1.554, 0.000) -- (1.383, 0.232) -- (1.597, 0.425) -- (1.810, 0.232) -- (1.639, 0.000) -- (1.853, 0.193) -- (2.066, 0.000) -- (1.895, 0.232) -- (2.109, 0.425) -- (1.861, 0.278) -- (1.690, 0.510) -- (1.904, 0.703) -- (2.117, 0.510) -- (1.947, 0.741) -- (2.160, 0.934) -- (1.912, 0.788) -- (1.742, 1.019) -- (1.863, 0.758) -- (1.615, 0.612) -- (1.445, 0.843) -- (1.658, 1.036) -- (1.410, 0.890) -- (1.240, 1.121) -- (1.453, 1.314) -- (1.667, 1.121) -- (1.496, 1.353) -- (1.709, 1.546) -- (1.923, 1.353) -- (1.752, 1.121) -- (1.965, 1.314) -- (2.179, 1.121) -- (2.008, 1.353) -- (2.222, 1.546) -- (1.974, 1.399) -- (1.803, 1.631) -- (2.017, 1.824) -- (2.230, 1.631) -- (2.059, 1.863) -- (2.273, 2.056) -- (2.486, 1.863) -- (2.315, 1.631) -- (2.529, 1.824) -- (2.742, 1.631) -- (2.572, 1.399) -- (2.324, 1.546) -- (2.537, 1.353) -- (2.367, 1.121) -- (2.580, 1.314) -- (2.794, 1.121) -- (2.623, 1.353) -- (2.836, 1.546) -- (3.050, 1.353) -- (2.879, 1.121) -- (3.092, 1.314) -- (3.306, 1.121) -- (3.135, 0.890) -- (2.887, 1.036) -- (3.101, 0.843) -- (2.930, 0.612) -- (2.683, 0.758) -- (2.804, 1.019) -- (2.633, 0.788) -- (2.385, 0.934) -- (2.599, 0.741) -- (2.428, 0.510) -- (2.642, 0.703) -- (2.855, 0.510) -- (2.684, 0.278) -- (2.437, 0.425) -- (2.650, 0.232) -- (2.479, 0.000) -- (2.693, 0.193) -- (2.906, 0.000) -- (2.735, 0.232) -- (2.949, 0.425) -- (3.162, 0.232) -- (2.992, 0.000) -- (3.205, 0.193) -- (3.418, 0.000) -- (3.248, 0.232) -- (3.461, 0.425) -- (3.214, 0.278) -- (3.043, 0.510) -- (3.256, 0.703) -- (3.470, 0.510) -- (3.299, 0.741) -- (3.512, 0.934) -- (3.726, 0.741) -- (3.555, 0.510) -- (3.769, 0.703) -- (3.982, 0.510) -- (3.811, 0.278) -- (3.564, 0.425) -- (3.777, 0.232) -- (3.606, 0.000) -- (3.820, 0.193) -- (4.033, 0.000) -- (3.862, 0.232) -- (4.076, 0.425) -- (4.289, 0.232) -- (4.119, 0.000) -- (4.332, 0.193) -- (4.545, 0.000) -- (4.375, 0.232) -- (4.588, 0.425) -- (4.341, 0.278) -- (4.170, 0.510) -- (4.383, 0.703) -- (4.597, 0.510) -- (4.426, 0.741) -- (4.639, 0.934) -- (4.392, 0.788) -- (4.221, 1.019) -- (4.342, 0.758) -- (4.095, 0.612) -- (3.924, 0.843) -- (4.137, 1.036) -- (3.890, 0.890) -- (3.719, 1.121) -- (3.932, 1.314) -- (4.146, 1.121) -- (3.975, 1.353) -- (4.189, 1.546) -- (4.402, 1.353) -- (4.231, 1.121) -- (4.445, 1.314) -- (4.658, 1.121) -- (4.487, 1.353) -- (4.701, 1.546) -- (4.453, 1.399) -- (4.282, 1.631) -- (4.496, 1.824) -- (4.709, 1.631) -- (4.539, 1.863) -- (4.752, 2.056) -- (4.504, 1.909) -- (4.334, 2.141) -- (4.455, 1.880) -- (4.207, 1.733) -- (4.037, 1.965) -- (4.250, 2.158) -- (4.002, 2.011) -- (3.832, 2.243) -- (3.953, 1.982) -- (3.705, 1.835) -- (3.977, 1.929) -- (4.098, 1.668) -- (3.851, 1.522) -- (3.680, 1.753) -- (3.801, 1.492) -- (3.554, 1.346) -- (3.383, 1.577) -- (3.596, 1.770) -- (3.349, 1.624) -- (3.178, 1.855) -- (3.392, 2.048) -- (3.605, 1.855) -- (3.434, 2.087) -- (3.648, 2.280) -- (3.400, 2.133) -- (3.229, 2.365) -- (3.351, 2.104) -- (3.103, 1.957) -- (2.932, 2.189) -- (3.146, 2.382) -- (2.898, 2.235) -- (2.727, 2.467) -- (2.941, 2.660) -- (3.154, 2.467) -- (2.983, 2.699) -- (3.197, 2.892) -- (3.410, 2.699) -- (3.240, 2.467) -- (3.453, 2.660) -- (3.666, 2.467) -- (3.496, 2.699) -- (3.709, 2.892) -- (3.462, 2.745) -- (3.291, 2.977) -- (3.504, 3.170) -- (3.718, 2.977) -- (3.547, 3.208) -- (3.760, 3.401) -- (3.974, 3.208) -- (3.803, 2.977) -- (4.016, 3.170) -- (4.230, 2.977) -- (4.059, 2.745) -- (3.812, 2.892) -- (4.025, 2.699) -- (3.854, 2.467) -- (4.068, 2.660) -- (4.281, 2.467) -- (4.110, 2.699) -- (4.324, 2.892) -- (4.537, 2.699) -- (4.367, 2.467) -- (4.580, 2.660) -- (4.793, 2.467) -- (4.623, 2.699) -- (4.836, 2.892) -- (4.588, 2.745) -- (4.418, 2.977) -- (4.631, 3.170) -- (4.845, 2.977) -- (4.674, 3.208) -- (4.887, 3.401) -- (4.640, 3.255) -- (4.469, 3.486) -- (4.590, 3.225) -- (4.343, 3.079) -- (4.172, 3.310) -- (4.385, 3.503) -- (4.138, 3.357) -- (3.967, 3.588) -- (4.180, 3.781) -- (4.394, 3.588) -- (4.223, 3.820) -- (4.437, 4.013) -- (4.650, 3.820) -- (4.479, 3.588) -- (4.693, 3.781) -- (4.906, 3.588) -- (4.735, 3.820) -- (4.949, 4.013) -- (4.701, 3.866) -- (4.530, 4.098) -- (4.744, 4.291) -- (4.957, 4.098) -- (4.787, 4.330) -- (5.000, 4.523);
\node[below, inner sep=3.43mm]{} (5.455, 0.000);
\end{tikzpicture}
\caption{Left: Chain $P^m$ with its four subchains of type $P^{m-1}$ marked
by their convex hulls;
Right: Chain $P^m$ with the scaled copy of itself $R^m$ (in red) constructed in
   Lemma~\ref{lemma:technical}\,(\ref{pf:f1}).}\label{fig:c-chain}
\end{figure}
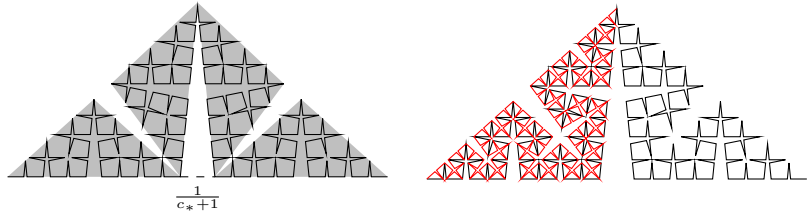

By Lemma~\ref{lemma:technical}\,(\ref{pf:f1}), the vertex set of $g_1(P^m)\cup g_2(P^m)$
is contained in the chain $R^m$ shown in \textsc{Fig.}~\ref{fig:c-chain}\,(right).
If we are in~\ref{pf:c-chain-case1}, \ie, $p_i\in g_1(P^{m})$ and $p_k\in g_2(P^{m})$,
then $p_i, p_j, p_k$ can be thought of as vertices of $R^m$.
The similarity between $R^m$ and $P^m$, maps points $p_i, p_j, p_k$ to
suitable points $p_i', p_j', p_k' \in P^m$ such that
\[
\frac{|p_i'p_j'|+|p_j'p_k'|}{|p_i'p_k'|} =\frac{|p_ip_j|+|p_jp_k|}{|p_ip_k|}.
\]
Since $p_i\in g_1(R^{m}) \cup g_2(R^{m})$ while $p_k\in g_3(R^{m}) \cup g_4(R^{m})$, the triplet
$(p_i', p_j', p_k')$ does not belong to \ref{pf:c-chain-case1}.
In other words, \ref{pf:c-chain-case1} can be represented by other cases.

Recall that in Lemma~\ref{lemma:simple}, we showed that $\conv(P^m)$ is
an isosceles triangle $T$ of diameter $1$.
Observe that if $|p_ip_k|\geq \frac{1}{c_*+1}$,
then
\[
\frac{|p_ip_j|+|p_jp_k|}{|p_ip_k|}\leq \frac{1+1}{\frac{1}{c_*+1}}= 2c_*+2=c,
\]
as required.
So we may assume that $|p_ip_k|<\frac{1}{c_*+1}$, therefore only \ref{pf:c-chain-case4} remains,
\ie, $p_i\in g_2(P^{m})$ and $p_k\in g_3(P^{m})$.

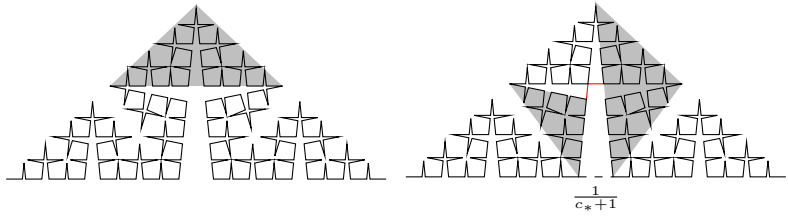
\begin{figure}[htbp!]
\centering
\begin{tikzpicture}[scale=0.6]
\fill[black!25] (2.727, 2.467) -- (5.000, 4.623) -- (7.273, 2.467) --cycle;
\path (0,0) pic[scale=0.6] {p4};
\node[below, inner sep=3.43mm]{} (5.455, 0.000);
\end{tikzpicture}
\hfill
\begin{tikzpicture}[scale=0.6]
\fill[black!25] (4.545, 0.000) -- (2.727, 2.467) -- (4.752, 2.056)--cycle;
\fill[black!25] (5.000, 4.623) -- (7.273, 2.467) -- (5.455, 0.000) --cycle;
\draw[red] (4.752, 2.056) -- (4.793, 2.467) -- (5.207, 2.467);
\draw[dashed] (4.545, 0.000) -- node[below]{$\frac{1}{c_*+1}$} (5.455, 0.000);

\path (0,0) pic[scale=0.6] {p4};
\end{tikzpicture}
\caption{Left: Chain $P^m$ with its subchain $g_5(P^{m})$ marked by its convex hull;
Right: The last case where $p_i$ is in the left shaded subchain and $p_k$ is in the right shaded subchain.
}\label{fig:c-chain2}
\end{figure}

By Lemma~\ref{lemma:technical}\,(\ref{pf:f2}), the ``top'' subchain $g_5(P^m)$ of $P^m$
is also similar to $P^{m-1}$, see \textsc{Fig.}~\ref{fig:c-chain2}\,(left).
If $p_i$ and $p_k$ are both in $g_5(P^{m})$,
\ie, $p_i\in \left(g_3\circ g_2(P^m)\right) \cup \left(g_4\circ g_2(P^m)\right)$
and $p_k\in \left(g_1\circ g_3(P^m)\right) \cup \left(g_2\circ g_3(P^m)\right)$, then so is $p_j$.

By the induction hypothesis, we have
\[
\frac{|p_ip_j|+|p_jp_k|}{|p_ip_k|}\leq c.
\]
So we may assume that at least one of $p_i$ and $p_k$ is not in $g_5(P^{m})$.
Without loss of generality, let $p_i\in g_2(P^{m}) \setminus g_5(P^{m})$.
The similarities that map $P^{m-1}$ to $g_2(P^{m})$ and $g_5(P^{m})$, respectively, have the same
scaling factor of $a=\frac{c_*}{2(c_*+1)}$, and they carry the bottom dashed segment in
\textsc{Fig.}~\ref{fig:c-chain2}\,(right), to the two red segments.

\begin{claim*} If $p_i\in g_2(P^{m}) \setminus g_5(P^{m})$
and $p_k\in g_3(P^{m})$, then $|p_ip_k|>\frac{c_*}{2(c_*+1)^2}$.
\end{claim*}

\begin{proof}
As noted above, we assume that $p_i$ is in $\conv(g_2(P^{m}) \setminus g_5(P^{m}))=
\Delta q_1q_2q_3$ in \textsc{Fig.}~\ref{fig:c-chain-case4}.
If $p_k\in g_5(P^m) \cap g_3(P^m)=\Delta q_7q_6q_5$, then the configuration is illustrated in
\textsc{Fig.}~\ref{fig:c-chain-case4}\,(left).
Note that $\Delta q_1q_2q_3$ and $\Delta q_7q_6q_5$ are reflections of each other with respect to
the bisector of $\angle q_3q_4q_5$.
Hence the shortest distance between $\Delta q_1q_2q_3$ and $\Delta q_7q_6q_5$ is
$\min\{|q_3q_5|, |q_2q_6|, |q_1q_7|\}$. Since $c_*\geq 1$, we have
\[|q_1q_7|>|q_7q_9|=|q_3q_5|=a^{3/2}=\left(\frac{c_*}{2(c_*+1)}\right)^{3/2}\geq \frac{c_*}{2(c_*+1)^2}.\]
Further note that $q_2q_4q_6q_8$ is an isosceles trapezoid, so the length of its diagonal is bounded by
$|q_2q_6|>|q_2q_4|=\frac{c_*}{2(c_*+1)^2}$.
Therefore the claim holds when $p_k\in\Delta q_7q_6q_5$.

Otherwise $p_k\in g_3(P^m) \setminus g_5(P^m)=\Delta q_9q_8q_7$: see
\textsc{Fig.}~\ref{fig:c-chain-case4}\,(right).
Note that $\Delta q_1q_2q_3$ and $\Delta q_9q_8q_7$ are reflections of each other with respect to
the bisector of $\angle q_4q_5q_6$.
So the shortest distance between the shaded triangles is
the minimum between $|q_3q_7|$, $|q_2q_8|$, and $|q_1q_9|$.
However, all three candidates are strictly larger than $|q_4q_6|=\frac{c_*}{2(c_*+1)^2}$.
This completes the proof of the claim.
\end{proof}
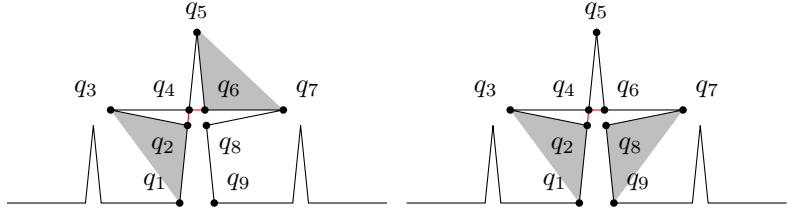
\begin{figure}[!ht]
\centering
\begin{tikzpicture}[scale=0.6]
\fill[black!25] (4.545, 0.000) -- (2.727, 2.467) -- (4.752, 2.056)--cycle;
\fill[black!25] (5.000, 4.623) -- (5.207, 2.467) -- (7.273, 2.467)--cycle;
\draw[red] (4.752, 2.056) -- (4.793, 2.467) -- (5.207, 2.467);
\draw (0.000, 0.000) -- (2.066, 0.000)
-- (2.273, 2.056) -- (2.479, 0.000)
-- (4.545, 0.000)node[circle, fill, inner sep=1pt, label=135:$q_1$]{}
-- (4.752, 2.056)node[circle, fill, inner sep=1pt, label=225:$q_2$]{}
-- (2.727, 2.467)node[circle, fill, inner sep=1pt, label=135:$q_3$]{}
-- (4.793, 2.467)node[circle, fill, inner sep=1pt, label=135:$q_4$]{}
-- (5.000, 4.523)node[circle, fill, inner sep=1pt, label=above:$q_5$]{}
-- (5.207, 2.467)node[circle, fill, inner sep=1pt, label=45:$q_6$]{}
-- (7.273, 2.467)node[circle, fill, inner sep=1pt, label=45:$q_7$]{}
-- (5.248, 2.056)node[circle, fill, inner sep=1pt, label=-45:$q_8$]{}
-- (5.455, 0.000)node[circle, fill, inner sep=1pt, label=45:$q_9$]{}
-- (7.521, 0.000) -- (7.727, 2.056) -- (7.934, 0.000) -- (10.000, 0.000);
\end{tikzpicture}
\hfill
\begin{tikzpicture}[scale=0.6]
\fill[black!25] (4.545, 0.000) -- (2.727, 2.467) -- (4.752, 2.056)--cycle;
\fill[black!25] (7.273, 2.467) -- (5.248, 2.056) -- (5.455, 0.000)--cycle;
\draw[red] (4.752, 2.056) -- (4.793, 2.467) -- (5.207, 2.467);
\draw (0.000, 0.000) -- (2.066, 0.000)
-- (2.273, 2.056) -- (2.479, 0.000)
-- (4.545, 0.000)node[circle, fill, inner sep=1pt, label=135:$q_1$]{}
-- (4.752, 2.056)node[circle, fill, inner sep=1pt, label=225:$q_2$]{}
-- (2.727, 2.467)node[circle, fill, inner sep=1pt, label=135:$q_3$]{}
-- (4.793, 2.467)node[circle, fill, inner sep=1pt, label=135:$q_4$]{}
-- (5.000, 4.523)node[circle, fill, inner sep=1pt, label=above:$q_5$]{}
-- (5.207, 2.467)node[circle, fill, inner sep=1pt, label=45:$q_6$]{}
-- (7.273, 2.467)node[circle, fill, inner sep=1pt, label=45:$q_7$]{}
-- (5.248, 2.056)node[circle, fill, inner sep=1pt, label=-45:$q_8$]{}
-- (5.455, 0.000)node[circle, fill, inner sep=1pt, label=45:$q_9$]{}
-- (7.521, 0.000) -- (7.727, 2.056) -- (7.934, 0.000) -- (10.000, 0.000);
\end{tikzpicture}
\caption{$p_i\in\Delta q_1q_2q_3$,
  Left: $p_k\in\Delta q_7q_6q_5$;
  Right: $p_k\in \Delta q_9q_8q_7$.}\label{fig:c-chain-case4}
\end{figure}

Now the diameter of $g_2(P^{m}) \cup g_3(P^{m})$ is $a=\frac{c_*}{2(c_*+1)}$
(note that there are three diameter pairs), so
\[
\frac{|p_ip_j|+|p_jp_k|}{|p_ip_k|}<\frac{2\cdot \frac{c_*}{2(c_*+1)}}{\frac{c_*}{2(c_*+1)^2}}= 2c_*+2=c,
\]
as required.
This concludes the proof of Lemma~\ref{lemma:c-chain} and Theorem~\ref{thm:lower-bound}.
\end{proof}

\section{Generalizations to Higher Dimensions}\label{sec:higher-dim}

A $c$-chain $P$ with $n$ vertices and its stretch factor $\delta_P$
can be defined in any metric space, not just the Euclidean
plane.
We now discuss how our results generalize to other metric spaces,
with a particular focus on the high-dimensional Euclidean space
$\RR^d$.
First,  we examine the
upper bounds from Section~\ref{sec:upper}.

\subsection{Upper bounds}
As already noted in Section~\ref{sec:upper}, the upper bound
$\delta_P \leq c(n-1)^{\log c}$ of Theorem~\ref{thm:logc}
holds for any positive distance function that need not even
satisfy the triangle inequality.

Theorem~\ref{thm:linear} uses only the triangle inequality,
and the bound $\delta_P\leq c(n-2)+1$ holds in any metric space.
This bound cannot be improved,
in the following sense:
For every $c \geq 2+\sqrt{5}$ and even $n$, we can define a finite
metric space on the vertex
set of $P$ by $|p_1p_n|=1$; for $1<i<n$,
\[
|p_1p_i|=
\begin{cases}
  \frac{c+1}{2} & \text{if } i \text{ is even}\\
  \frac{c-1}{2} & \text{if } i \text{ is odd}
\end{cases}
\text{ and }
|p_ip_n|=
\begin{cases}
  \frac{c-1}{2} & \text{if } i \text{ is even}\\
  \frac{c+1}{2} & \text{if } i \text{ is odd}
\end{cases};
\]
and $|p_ip_j|=c$ for all $1<i<j<n$.
It is easy to verify that $P$ is a $c$-chain (the case that
puts the strongest constraint on $c$ in (\ref{eq:c-chain})
occurs if, e.g., $i = 1$, $1 < j < n$ is even, and $j < k < n$
is odd) and that $P$ has stretch factor
\[
  \delta_P = \frac{\sum_{i=1}^{n-1}|p_ip_{i+1}|}{|p_1p_n|}=|p_1p_2|+|p_{n-1}p_n|+\sum_{i=2}^{n-2}|p_ip_{i+1}|=c(n-2)+1.
\]

The proof of Theorem~\ref{thm:1/2} uses a volume argument in the plane.
The argument extends to $\mathbb{R}^d$, for all constant dimensions
$d\geq 2$, and yields $\delta_P =O\left(c^2(n-1)^{(d-1)/d}\right)$.
\begin{theorem}\label{thm:dspace}
For a $c$-chain $P$ with $n$ vertices in $\mathbb{R}^d$,
for some constant $d\geq 2$, we have
$$\delta_P =O\left(c^2(n-1)^{(d-1)/d}\right).$$
\end{theorem}
\begin{proof}
Let $P=(p_1,\dots, p_n)$ be a $c$-chain in $\mathbb{R}^d$,
for some constants $c \geq 1$ and $d\in \mathbb{N}$.
We may assume that $|p_1 p_n|=1$.
By the $c$-chain property, all vertices of $P$ lie in an ellipsoid $E$ with
foci at $p_1$ and $p_n$, with major axis of length $c$.
Let $U$ be a ball of radius $c/2$ concentric with $E$; and note that $E\subseteq U$.

We set $x=c^2/(n-1)^{1/d}$; and let $L_0$ and $L_1$ be the sum of lengths of all edges in $P$
of length at most $x$ and more than $x$, respectively. By definition, we have $L=L_0+L_1$ and
\begin{equation}\label{eq:shortDD}
L_0\leq (n-1)x =c^2(n-1)^{(d-1)/d}.
\end{equation}
We shall prove that $L_1= O\left(c^2(n-1)^{(d-1)/d}\right)$.
For this, we further classify the edges in $L_1$ according to their lengths:
For $\ell=0,1,\dots , \infty$, let
\begin{equation}\label{eq:PL1DD}
P_\ell=\left\{p_i: 2^\ell x< |p_ip_{i+1}|\leq 2^{\ell+1}x\right\}.
\end{equation}
As shown in the proof of Theorem~\ref{thm:linear},
we have $|p_ip_{i+1}|\leq c$, for all $i=0,\dots, n-1$.
Consequently, $P_\ell=\emptyset$ when $c\leq 2^\ell x$,
or equivalently $\log(c/x)\leq \ell$.

We use a volume argument to derive an upper bound on the cardinality of $P_\ell$, for
$\ell=0,1,\dots , \lfloor\log (c/x)\rfloor$.
Assume that $p_i,p_k\in P_\ell$, and w.l.o.g., $i<k$.
If $k=i+1$, then $2^\ell x<|p_ip_k|$ by~\eqref{eq:PL1DD}.
Otherwise,
\[
2^\ell x<|p_ip_{i+1}| < |p_ip_{i+1}|+|p_{i+1}p_k| \leq c|p_ip_k|,
\text{  or  } \frac{2^\ell x}{c} < |p_ip_k|.
\]
Consequently, the balls of radius
\begin{equation}\label{area:RaDD}
R=\frac{2^\ell x}{2c} = \frac{2^\ell c}{2(n-1)^{1/d}}
\end{equation}
centered at the points in $P_{\ell}$ are interior-disjoint.
The volume of each ball is $\alpha_d R^d$, where $\alpha_d>0$ depends on $d$ only.
Since $P_\ell\subset E$, these balls are contained in the $R$-neighborhood of
the ball $U$, which is a ball $U_R$ of radius $\frac{c}{2}+R$ concentric with $U$.
For $\ell \leq \log(c/x)$, we have $2^\ell x\leq c$, hence
$R=\frac{2^\ell x}{2c} \leq \frac{c}{2c}=\frac12$.
Consequently, the radius of $U_R$ is at most $c$.
Since $U_R$ contains $|P_\ell|$ interior-disjoint balls
of radius $R$, we obtain
\begin{equation}\label{eq:PL2DD}
|P_\ell|
\leq \frac{\alpha_d c^d}{\alpha_d R^d}
= \left(\frac{c}{R}\right)^d
=\left(\frac{2(n-1)^{1/d}}{2^\ell}\right)^d
\leq \frac{2^d}{2^{d\ell}} (n-1).
\end{equation}
For every segment $p_{i}p_{i+1}$ with length more than $x$, we have that $p_i\in P_\ell$,
for some $\ell\in \{0,1,\dots , \lfloor\log (c/x)\rfloor\}$.
Using (\ref{eq:PL2DD}), the total length of these segments is
\begin{align*}
L_1&\leq  \sum_{\ell=0}^{\lfloor\log (c/x)\rfloor} |P_\ell| \cdot 2^{\ell+1}x
 <   \sum_{\ell=0}^{\lfloor\log (c/x)\rfloor}  \frac{2^d}{2^{d\ell}} (n-1) \cdot 2^{\ell+1}\cdot \frac{c^2}{(n-1)^{1/d}}\\
&<   2^{d+1}c^2(n-1)^{\frac{d-1}{d}} \sum_{\ell=0}^\infty \frac{1}{2^{(d-1)\ell}}
\leq  2^{d+2}c^2(n-1)^{(d-1)/d},
\end{align*}
as required. Together with \eqref{eq:shortDD}, this yields
$L= O\left(c^2(n-1)^{(d-1)/d}\right)$.
\end{proof}

\subsection{Lower bounds in \texorpdfstring{$\mathbb{R}^d$}{higher dimensions}}
We show that the exponent $(d-1)/d$ in Theorem~\ref{thm:dspace}
cannot be improved. More precisely, for
every $\eps>0$, we construct a family of axis-parallel chains in $\RR^d$
whose stretch factor is $n^{(1-\eps)(d-1)/d}$ for sufficiently large $n(\eps)$.
For the higher-dimensional case, we focus on axis-parallel chains, as
they are easier to analyze.
In the plane ($d=2$), this construction is also possible, but it yields weaker bounds than Theorem~\ref{thm:lower-bound}.

\begin{theorem}\label{thm:d-lower-bound}
Let $d\geq 2$ be an integer. For all constants
$\eps>0$ and sufficiently large $c = \Omega(d)$,
there is a positive integer $n_0$ such that for every $n\geq n_0$,
there exists an axis-parallel $c$-chain in $\RR^d$ with
$n$ vertices and stretch factor at least $(n-1)^{(1-\eps)(d-1)/d}$.
\end{theorem}
\begin{proof}
Let $d\geq 2$, $\eps>0$, and $c= \Omega(d)$ be given.
We describe a recursive construction in terms of
an even integer parameter 
\begin{equation}\label{eq:r}
r > 3^{(1-\eps)/(d\eps)}.
\end{equation}
We recursively define a family $\mathcal{Q}_c=\{Q^m\}_{m\in \NN}$
of axis-parallel $c$-chains in $\RR^d$, where each chain $Q^m$ has
$n_m\leq 3^{m+1} r^{dm}$ vertices.
Then, we show that the stretch factor of every $Q^m$ is at least
$(n_m-1)^{(1-\eps)(d-1)/d}$ for sufficiently large $m\in \NN$.

\subparagraph*{Construction of $\mathcal{Q}_c$.}
For each chain in $\mathcal{Q}_c$, we maintain a subset of \emph{active} directed
edges, which are disjoint, have the same length, and are parallel to the same coordinate axis.
In a nutshell, the recursion works as follows.
We start with a chain $Q^0$ that consists
of a single segment that is labeled active;
then for $m=1, 2, \ldots$, we obtain $Q^{m}$ by replacing each active
edge in a fixed chain $\pi$ by a homothetic copy of $Q^{m-1}$.
The chain $\pi$ is defined below;
it consists of $6r^d+1$ edges, $3r^d$ of which are active.

We define the chain $\pi$ in four steps, see Fig.~\ref{fig:pi} for an illustration.
Let $\mathbf{e_i}$, $i=1,\ldots , d$, be the standard basis vectors in $\mathbb{R}^d$.
\begin{enumerate}[label=(\arabic*)]
\item Consider the $(d-1)$-dimensional hyperrectangle $A=[0,1]\times [0,r-1]^{d-2}$.
Let $\gamma_0$ be an axis-parallel Hamiltonian cycle on the $2r^{d-2}$ integer
points that lie in $A$ such that the origin is incident to an edge
parallel to the $x_1$-axis.
We label the vertices of $\gamma_0$ by $v_i$, for $i=1, \ldots, 2r^{d-2}$, in
order,
where $v_1$ is the origin.
\item Let $a=(3r^2+1)/(3r)=r+1/(3r)$, and consider the $d$-dimensional
hyperrectangle $A\times [0,a]=[0,1]\times [0,r-1]^{d-2}\times [0,a]$.
We construct a Hamiltonian cycle $\gamma_1$
on the $4r^{d-2}$ points in
\[\left\{v_i\times\{0, a\} \mid i=1, \ldots, 2r^{d-2} \right\}\]
by replacing every edge $(v_{2i-1}, v_{2i})$ in $\gamma_0$ with three edges
\[
((v_{2i-1},0), (v_{2i-1}, a)),\;
((v_{2i-1},a), (v_{2i}, a)),\;
\text{and } ((v_{2i},a), (v_{2i},0)).
\]
Note that $\gamma_1$ has $4r^{d-2}$ edges, such that
$2r^{d-2}$ edges have length $a$ and are parallel to the $x_d$-axis.
Also note that the origin $v_1$ is incident to a unit edge parallel to the $x_1$-axis,
and to an edge of length $a$ parallel to the $x_d$-axis.
\item Delete the edge of $\gamma_1$ that is
incident to the origin $v_1$ and parallel to the $x_1$-axis.
This turns $\gamma_1$
into a Hamiltonian chain $\gamma_2$ from the origin to the vertex $\mathbf{e}_1$
in the hyperrectangle $A\times [0,a]=[0,1]\times [0,r-1]^{d-2}\times [0,a]$.
\item Consider the hyperrectangle $B(\pi)=\left[0,3r^2+1\right]\times [0,r-1]^{d-2}\times [0,a]$.
Let $\pi$ be the chain from the origin to $(3r^2+1)\cdot \mathbf{e}_1$
that is obtained by the concatenation of $3r^2/2$ copies of $\gamma_2$,
translated by vectors $(2j-1)\cdot \mathbf{e}_1$ for $j=1,2,\ldots , 3r^2/2$,
interlaced with $3r^2/2+1$ unit segments parallel to $\mathbf{e}_1$.
Note that $\pi$ has $\left(3r^2/2\right)\cdot \left(4r^{d-2}-1\right)+3r^2/2+1= 6r^d+1$ edges,
$\left(3r^2/2\right)\cdot 2r^{d-2}=3r^d$ of which have length $a$
and are parallel to the $x_d$-axis.
We label all these edges as active, so that $\pi$ has $3r^d$ active edges.
Observe that $B(\pi)$ is the minimum axis-parallel bounding box of $\pi$.
\end{enumerate}

\begin{figure}[htbp]
\centering
\includegraphics[width=0.75\textwidth]{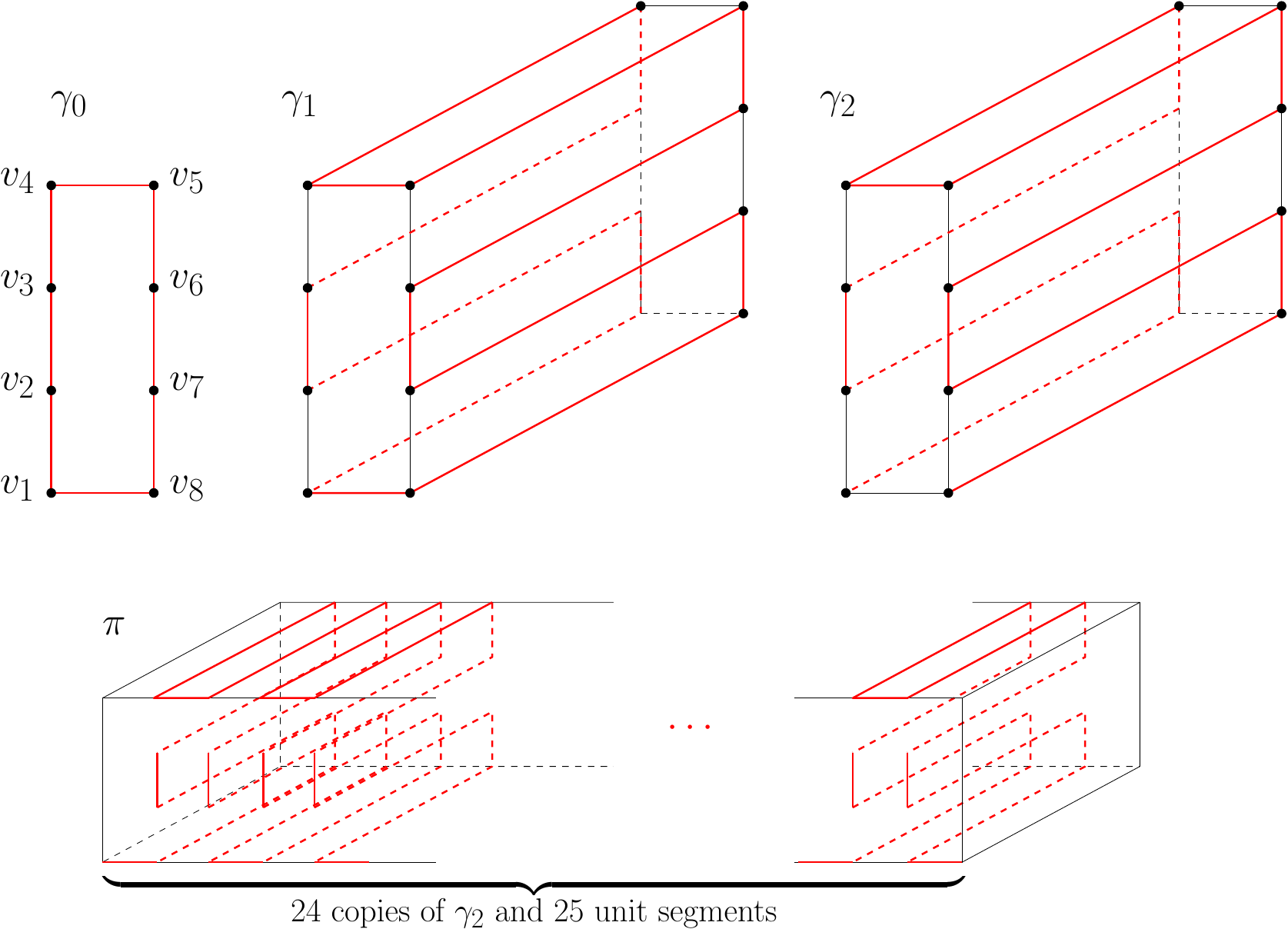}
\caption{The cycles $\gamma_0$ (top left), $\gamma_1$ (top middle),
and the chains $\gamma_2$ (top right), $\pi$ (bottom) for $d=3$ and $r=4$.
The cycles and chains are in red, their bounding boxes are outlined in black.}\label{fig:pi}
\end{figure}

\begin{lemma}\label{lem:pi}
The chain $\pi$ is a $c'$-chain for $c'=8+2r\sqrt{d-1}$.
Furthermore, if the points $q_1$, $q_2$, and $q_3$ are contained in
active edges,
in this order along $\pi$ and not all in the same edge, then
\[\frac{|q_1q_2|+|q_2q_3|}{|q_1q_3|} \leq 8+2r\sqrt{d-1}.\]
\end{lemma}
\begin{proof}
We extend $\pi$ to a chain
$\pi'$ by attaching a parallel copy of $\gamma_2$ to each end of $\pi$.
We prove the lemma for $\pi'$. Then, the lemma also follows for
$\pi$, as $\pi$ is a subchain of $\pi'$.
Write $\pi'=(p_1,\ldots , p_n)$.
Since $p_i$, $p_j$, and $p_k$ are endpoints of active edges,
for any choice of $1\leq i<j<k\leq n$, the second claim in the lemma
implies that $\pi'$ is a $c'$-chain.

We give an upper bound for the ratio $(|q_1q_2|+|q_2q_3|)/|q_1q_3|$.
Recall that all the active edges in $\pi'$ come from the $3r^2/2+2$
translated copies of the chain $\gamma_2$;
and that $\gamma_2$ has vertices in an axis-aligned bounding box
$B=[0,1]\times [0,r-1]^{d-2}\times [0,a]$.
Denote by $B_0, B_1,\ldots, B_{3r^2/2}, B_{3r^2/2+1}$ the minimum
axis-aligned bounding boxes of the $3r^2/2+2$ translates of
$\gamma_2$ in $\pi'$.
Suppose that $q_1$, $q_2$, and $q_3$ are in $B_{i_1}$, $B_{i_2}$, and $B_{i_3}$,
respectively. By assumption, $i_1\leq i_2\leq i_3$.

If $i_1=i_3$, then $q_1$, $q_2$, and $q_3$ are in $B_{i_1}$.
Since $q_1$ and $q_3$ are not on the same active edge,
and since $\gamma_0$ has integer coordinates, we have $|q_1q_3|\geq 1$. Consequently,
\begin{align*}
\frac{|q_1q_2|+|q_2q_3|}{|q_1q_3|}
&\leq \frac{2\cdot\diam\left(B_{i_1}\right)}{1}\\
&\leq 2\sqrt{1^2+(d-2)(r-1)^2+a^2}\\
&= 2\sqrt{1+(d-2)(r-1)^2+(r+1/(3r))^2}\\
&\leq 2\sqrt{2+(d-1)r^2}\\
&< 2\sqrt{2}+2r\sqrt{d-1}.
\end{align*}

Otherwise $i_1<i_3$, and the first coordinates of
$q_1$ and $q_3$ differ by at least $2(i_3-i_1)-1\geq i_3-i_1$, hence $|q_1q_3|\geq i_3-i_1$.
In this case,
\begin{align*}
\frac{|q_1q_2|+|q_2q_3|}{|q_1q_3|}
&\leq \frac{2\cdot\diam(B_{i_1}\cup B_{i_3})}{i_3-i_1}\\
&\leq \frac{2\cdot\sqrt{(2(i_3-i_1)+1)^2+(d-2)(r-1)^2+a^2}}{i_3-i_1}\\
&\leq \frac{4(i_3-i_1)+4+2r\sqrt{d-1}}{i_3-i_1}\\
&\leq 8+2r\sqrt{d-1},
\end{align*}
as claimed.
This completes the proof of Lemma~\ref{lem:pi}.
\end{proof}

\begin{figure}[htbp]
\centering
\includegraphics[width=0.75\textwidth]{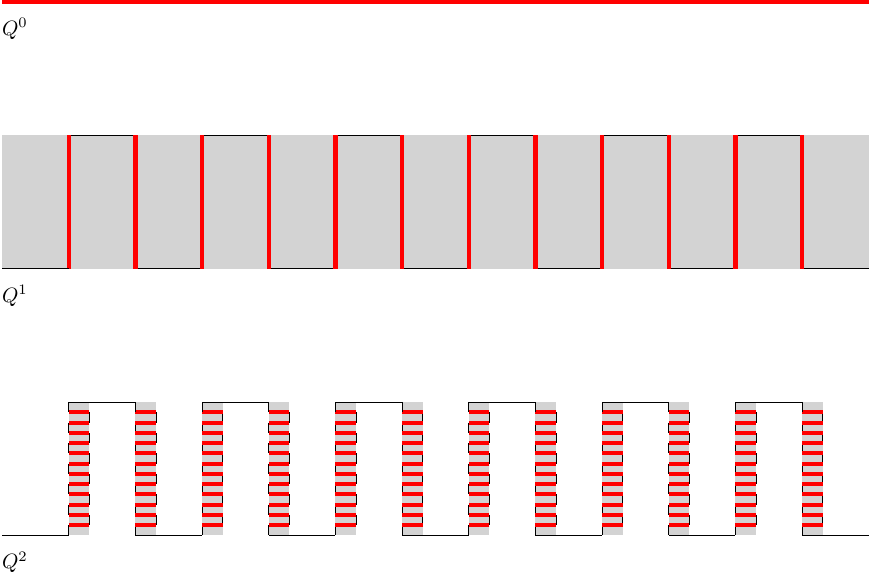}
\caption{The chains $Q^0$ (top), $Q^1$ (middle), and $Q_2$ (bottom) for $d=r=2$.
The active edges are highlighted by red bold lines.
The bounding box $B$ of $Q^1$ and bounding boxes $B'$ of homothetic copies of
$Q^1$ in $Q^2$ are shaded.}\label{fig:qqq}
\end{figure}

\begin{figure}[htbp]
\centering
\includegraphics[width=\textwidth]{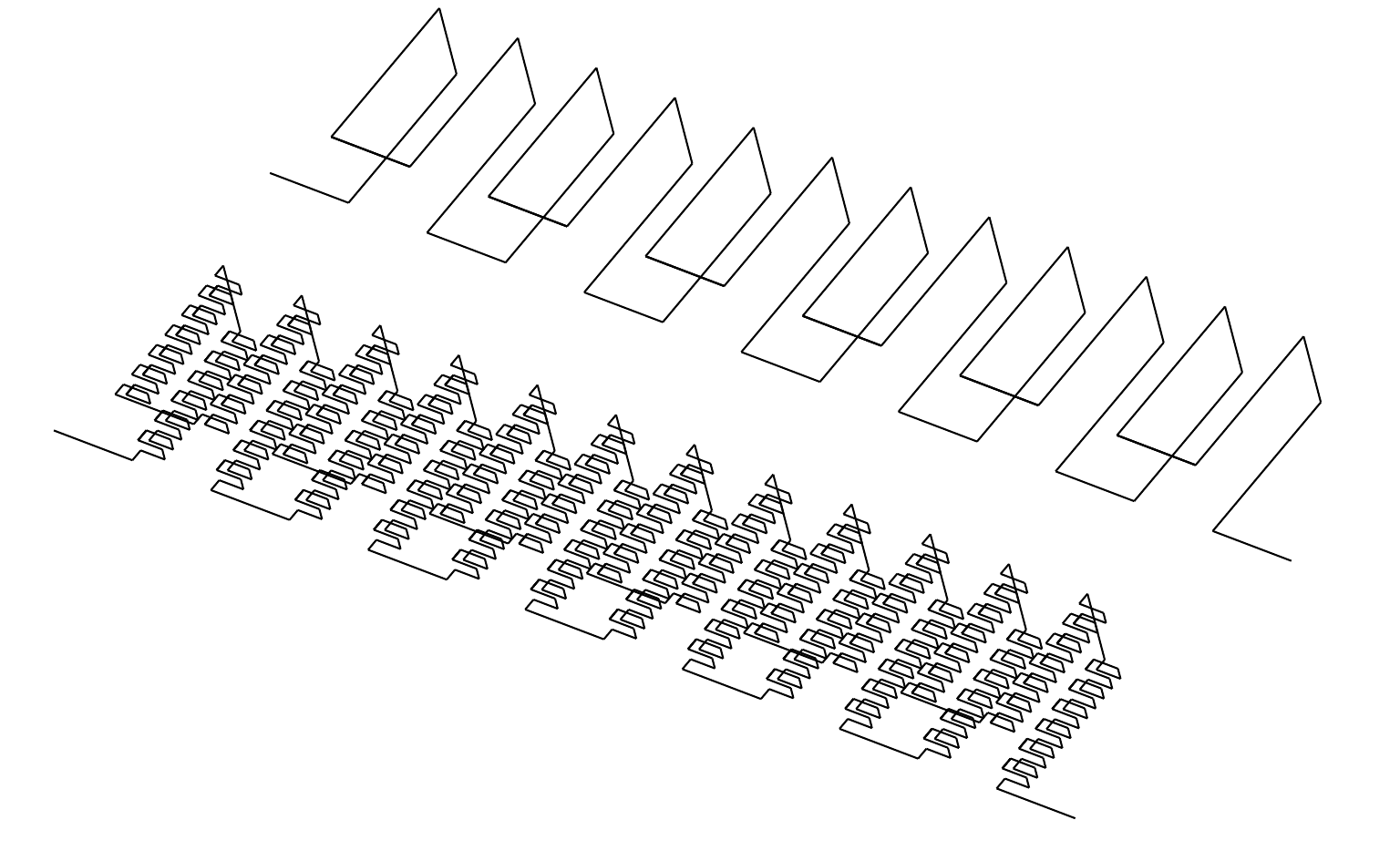}
\caption{The chains $Q^1$ (top) and $Q^2$ (bottom) for $d=3$ and $r=2$.}\label{fig:r2d3}
\end{figure}

Now the axis-parallel chains $Q^m$ can be defined recursively (see Fig.~\ref{fig:qqq} for an illustration).
Let $Q^0$ be a line segment of length $3r^2+1$, parallel to the $x_1$-axis, labeled active.
Let $Q^1$ be $\pi$ and let $B=B(\pi)$ be its minimum axis-parallel bounding box.
Recall that $B=\left[0,3r^2+1\right]\times [0,r-1]^{d-2}\times [0,a]$.

We maintain the invariant that each chain $Q^m$ ($m\in \mathbb{N}$) is contained in $B$.
In order to do this, let $B'$ be a hyperrectangle obtained from $B$ by a rotation of $90$ degrees
in the $\langle\mathbf{e}_1,\mathbf{e}_d\rangle$ plane, and scaling by a factor of $a/(3r^2+1)=1/(3r)$;
\ie, $B'=[0,a/(3r)]\times [0, (r-1)/(3r)]^{d-2}\times [0, a]$.
In particular, the longest edges of $B'$ are parallel to the active edges in $B$, and they all have length $a$.
Place a translate of $B'$ along each active edge in $Q^1$ such that all such translates
are contained in $B$. Note that the distance between any two translates is at least $1-2a/(3r)=1/3-2/(9r^2)\geq 5/18$.

For all $m\geq 1$, we construct $Q^{m+1}$ by replacing the active edges of $Q^1$ with
a scaled (and rotated) copy of $Q^m$ in each translate of $B'$; and we let the active edges of $Q^{m+1}$
be the active edges in these new copies of $Q^m$.

Instead of keeping track of the total length of $Q^m$,
we analyze the total length of the active edges of $Q^m$.
In each iteration, the number of active edges increases by a factor of
$3r^d$ and the length of an active edge decreases by a factor of $a/(3r^2+1)=1/(3r)$.
Overall the total length of active edges increases by a factor of $r^{d-1}$.
It follows that for all $m\in \mathbb{N}$, the chain $Q^m$ has $3^m r^{dm}$ active edges,
and their total length is $(3r^2+1)\cdot r^{(d-1)m}$.
Thus, we have
\begin{equation}
\label{equ:qmlength}
|Q^m| \geq (3r^2+1)\cdot r^{(d-1)m},
\end{equation}
for $m \in \mathbb{N}$.
Next we estimate the number of vertices in $Q^m$.
Recall that the recursive construction replaces each active edge
with $3r^d$ active edges and $3r^d+1$ inactive edges (which are never replaced).
Consequently, for $m\geq 1$, the number of inactive edges in $Q^m$
is $(3r^d+1)\sum_{i=0}^{m-1}3^i r^{di}$,
and the total number of vertices is
\[
n_m =1+ 3^m r^{dm}+ (3r^d+1)\sum_{i=0}^{m-1}3^i r^{di}= 1+ 3^m r^{dm}+ (3r^d+1)\frac{3^m r^{dm}-1}{3r^d-1}.
\]
Note that
\begin{equation}\label{eq:nm}
3^m r^{dm} < n_m \leq 3\cdot 3^m r^{dm}.
\end{equation}
Since the distance between the two endpoints of $Q^m$ remains $3r^2+1$,
we can use~\eqref{equ:qmlength} and the upper bound in \eqref{eq:nm}
to obtain %
\begin{equation} \label{equ:stretch}
\frac{|Q^m|}{3r^2+1}
\geq r^{(d-1)m}
\geq \left(\frac{n_m}{3^{m+1}}\right)^{\frac{d-1}{d}}.
\end{equation}
Now,
\eqref{eq:r} implies that $r = \beta \cdot 3^{(1-\eps)/(d\eps)}$, for
a constant $\beta > 1$.
Thus, using the lower bound in \eqref{eq:nm}, we get that
\[
n_m^\eps > 3^{\eps m} r^{\eps d m}
  = 3^{\eps m} \left( \beta \cdot 3^{\frac{(1-\eps)}{\eps d}} \right)^{\eps d m}
  = \beta^{\eps d m} \cdot 3^m \geq 3^{m + 1},
\]
for sufficiently large $m$. Hence,
combining with \eqref{equ:stretch}, we can bound the stretch
factor from below as
\[ \frac{|Q^m|}{3r^2+1}
\geq n_m^{(1-\eps)\frac{d-1}{d}},
\]
for sufficiently large $m$.

It remains to show that $\mathcal{Q}_c=\{Q^m: m\in \NN\}$ is a family of $c$-chains,
where $c=\Omega(d)$. We proceed by induction on $m$. The claim is trivial
for $m=0$, and it follows from Lemma~\ref{lem:pi} for $m=1$.

Now, let $m\geq 2$. Write $Q^m=(p_1,\ldots , p_n)$, and let $1\leq i<j<k\leq n$.
We shall derive an upper bound for the ratio $(|p_ip_j|+|p_jp_k|)/|p_ip_k|$.
Recall that $Q^m$ is obtained by replacing each active edge of $Q^1 = \pi$ by a
scaled copy of $Q^{m-1}$.
If $p_i$ and $p_k$ are in the same copy of $Q^{m-1}$, then so is $p_j$ and
induction completes the proof.

Otherwise let $B_i'$, $B_j'$, and $B_k'$ be the bounding boxes of the copies of $Q^{m-1}$
that contain $p_i$, $p_j$, and $p_k$, respectively.
Let $a_i$, $a_j$, and $a_k$ be the active segments in $Q^1$ that are replaced
by $B_i'$, $B_j'$, and $B_k'$;
and let $q_i\in a_i$, $q_j\in a_j$, and $q_k\in a_k$ be the
orthogonal projections of $p_i$, $p_j$, and $p_k$ onto $a_i$, $a_j$, and $a_k$, respectively.
(If $i=1$, then let $q_i=p_1$; if $k=n$, then let $q_k=p_n$.
Since the proof of Lemma~\ref{lem:pi} works on the extended chain $\pi'$, it applies to
$q_i$, $q_j$, and $q_k$ regardless of this special condition.)

Since each projection happens within a hyperplane orthogonal to
the $x_d$-axis onto an active edge in a translated copy of
$[0,a/(3r)]\times [0, (r-1)/(3r)]^{d-2}\times [0, a]$,
we have that $|p_iq_i|$, $|p_jq_j|$, and $|p_kq_k|$ are each
bounded above by
\[
\sqrt{\frac{a^2}{(3r)^2}+(d-2)\frac{(r-1)^2}{(3r)^2}}\leq \frac{\sqrt{d-1}}{3}+\frac{1}{3r} \leq \frac{\sqrt{d-1}}{3}+\frac16.
\]
As there are at least two distinct active edges among $a_i$, $a_j$, and $a_k$
(and as the distance between $p_1$ or $p_n$ and any active edge in $\pi$
is at least $1$),
we have
\[
|q_iq_j|+|q_jq_k|\geq \max\{|q_iq_j|,|q_jq_k|\}\geq 1.
\]
Combining these two bounds with the triangle inequality, we get
\begin{align*}
|p_ip_j|+|p_jp_k|
&\leq (|p_iq_i|+|q_iq_j|+|q_jp_j|)+(|p_jq_j|+|q_jq_k|+|q_kp_k|)\\
&\leq |q_iq_j|+|q_jq_k|+\frac43\,\sqrt{d-1}+\frac23\\
&\leq \left(\frac53+\frac43\,\sqrt{d-1}\right)(|q_iq_j|+|q_jq_k|).
\end{align*}
On the other hand, we have $|p_ip_k|\geq \frac{5}{18} |q_iq_k|$,
as this lower bound holds for the projections  of the edges to
each coordinate axis. Now Lemma~\ref{lem:pi} yields
\begin{align*}
\frac{|p_ip_j|+|p_jp_k|}{|p_ip_k|}
&\leq \frac{5/3+4\sqrt{d-1}/3}{5/18} \cdot \frac{|q_iq_j|+|q_jq_k|}{|q_iq_k|}\\
&\leq (6+24\,\sqrt{d-1}/5)\cdot (8+2r\sqrt{d-1})\\
&= O(r(d-1)).
\end{align*}
This completes the proof of Theorem~\ref{thm:d-lower-bound}.
\end{proof}

\section{Algorithm for Recognizing \texorpdfstring{$c$-Chains}{c-Chains}}
\label{sec:algo}

In this section, we design a randomized Las Vegas algorithm to recognize $c$-chains
in $d$-dimensional Euclidean space.
More precisely, given a polygonal chain $P=(p_1,\ldots ,p_n)$ in $\RR^d$, and a parameter $c\geq 1$,
the algorithm decides whether $P$ is a $c$-chain,
in $O\left(n^{3-1/d}\ {\rm polylog}\ n\right)$ expected time.
By definition, $P=(p_1,\dots, p_n)$ is a $c$-chain if
$|p_ip_j| + |p_jp_k|\leq c\ |p_ip_k|$ for all $1\leq i<j<k\leq n$;
equivalently, $p_j$ lies in the ellipsoid of major axis $c$
with foci $p_i$ and $p_k$. Consequently, it suffices to test,
for every pair $1\leq i<k\leq n$, whether the ellipsoid
of major axis $c|p_ip_k|$ with foci $p_i$ and $p_k$ contains
$p_j$, for all $j$, $i<j<k$. For this, we can apply recent
results from geometric range searching.

\begin{theorem}\label{thm:alg}
  For every integer $d\geq 2$,
  there are randomized algorithms that can decide, for a polygonal chain $P=(p_1,\dots , p_n)$ in $\RR^d$
  and a threshold $c>1$, whether $P$ is a $c$-chain in
\later{
  \begin{itemize}
  \item}
  $O\left(n^{3-1/d}\ {\rm polylog}\ n\right)$ expected time and $O(n\log n)$ space.
\later{, or
  \item $O\left(n^{3-2/(2d+1)+\eps}\right)$ expected time and space for any given $\eps>0$.
  \end{itemize}
    }
\end{theorem}

Agarwal, Matou\v{s}ek and Sharir~\cite[Theorem~1.4]{AMS13} constructed, for a set $S$ of $n$ points
in $\RR^d$, a data structure that can answer semi-algebraic range searching queries;
in particular, it can report the number of points in $S$ that are contained in a query ellipsoid.
Specifically, they showed that, for every $d\geq 2$ and $\eps>0$, there is a constant $B$
and a data structure with $O(n)$ space, $O\left(n^{1+\eps}\right)$ expected preprocessing time,
and $O\left(n^{1-1/d}\log^B n\right)$ query time. The construction was later simplified by
Matou\v{s}ek and Pat\'akov\'a~\cite{MP15}.
Using this data structure, we can quickly decide whether a given polygonal chain is a $c$-chain.

\begin{proof}[Proof of Theorem~\ref{thm:alg}]
  Subdivide the polygonal chain $P=(p_1,\dots , p_n)$ into
  two equal-sized subchains (to within $1$)
  $P_1=(p_1,\dots, p_{\lceil n/2\rceil})$ and $P_2=(p_{\lceil n/2\rceil},\dots, p_n)$;
  and recursively subdivide $P_1$ and $P_2$ until reaching 1-vertex chains.
  Denote by $T$ the recursion tree. Then, $T$ is a binary tree of depth $\lceil \log n\rceil$.
  There are at most $2^i$ nodes at level $i$; the nodes at level $i$ correspond to edge-disjoint subchains of $P$,
  each of which has at most $n/2^i$ edges. Let $W_i$ be the set of subchains on level $i$ of $T$;
  and let $W=\bigcup_{i\geq 0}W_i$. We have $|W|\leq 2n$.

  For each polygonal chain $Q\in W$, construct an ellipsoid range searching data structure
  $\DS(Q)$ described above~\cite{AMS13} for the vertices of $Q$, with a suitable parameter $\eps>0$.
  Their overall expected preprocessing time is
\[
\sum_{i=0}^{\lceil \log n\rceil} 2^i\cdot  O\left( \left(\frac{n}{2^i}\right)^{1+\eps} \right)
=O\left(n^{1+\eps}\sum_{i=0}^{\lceil \log n\rceil} \left(\frac{1}{2^i}\right)^{\eps}\right)
=O\left(n^{1+\eps}\right),
\]
and their space requirement is
$\sum_{i=0}^{\lceil \log n\rceil} 2^i\cdot  O\left(n/2^i\right)=O(n\log n)$.
The query time of each chain in $W_i$ is $O\left(\left(n/2^i\right)^{1-1/d}\ {\rm polylog}\ \left(n/2^i\right)\right)$.

For each pair of indices $1\leq i<k\leq n$, we do the following. Let $E_{i,k}$ denote the ellipsoid
of major axis $c|p_ip_k|$ with foci $p_i$ and $p_k$. The chain $(p_{i+1},\dots , p_{k-1})$ is subdivided
into $O(\log n)$ maximal subchains in $W$, using at most two subchains from each
set $W_i$, $i=0,\dots ,\lceil \log n\rceil$. For each of these subchains $Q\in W$,
query the data structure $\DS(Q)$ with the ellipsoid $E_{i,k}$. If all queries are positive
(\ie, the count returned is $|Q|$ in \emph{all} queries), then $P$ is a $c$-chain;
otherwise there exists $j$, $i<j<k$, such that $p_j\notin E_{i,k}$,
hence $|p_ip_j|+|p_jp_k|>c|p_ip_k|$, witnessing that $P$ is not a $c$-chain.

The query time over all pairs $1\leq i<k\leq n$ is bounded above by
\begin{align*}
\binom{n}{2} \sum_{i=0}^{\lceil \log n\rceil}
2\cdot
O\left(\left(\frac{n}{2^i}\right)^{1-1/d}\ {\rm polylog}\ \left(\frac{n}{2^i}\right)\right)
&= \binom{n}{2}\cdot O\left(n^{1-1/d}\ {\rm polylog}\  n\right)\\
&=O\left(n^{3-1/d}\ {\rm polylog}\ n\right).
\end{align*}
This subsumes the expected time needed for constructing the structures $\DS(Q)$, for all $Q\in W$.
So the overall running time of the algorithm is $O\left(n^{3-1/d}\ {\rm polylog}\ n\right)$, as claimed.
\later{
\smallskip
Alternatively, we can use the ellipsoid range counting data structure with logarithmic query time, described above~\cite{AgarwalAEZ19}. In order to keep the space and preprocessing time under control, we consider only
subchains of length up to $O(n^\alpha)$, for a suitable constant $\alpha\in [0,1]$.
These subchains correspond to the bottom $\lceil \alpha\log n\rceil+1$ levels of the recursion tree $T$.
Specifically, let $W'=\bigcup_{i=\lceil \log n\rceil -\lceil \alpha\log n\rceil}^{\lceil \log n\rceil}W_i$.

For each polygonal chain $Q\in W'$, construct an ellipsoid range searching data structure
$\DS(Q)$ in~\cite{AgarwalAEZ19} for the vertices of $Q$, with a suitable parameter $\eps>0$.
  Their overall size and expected preprocessing time is
\begin{align*}
\sum_{i=0}^{\lceil \alpha\log n\rceil} n^{1-\alpha} 2^i\cdot  O\left( \left(\frac{n^\alpha}{2^i}\right)^{2d+1+\eps} \right)
&=O\left(n^{1-\alpha}\cdot n^{\alpha(2d+1+\eps)}\sum_{i=0}^{\lceil \alpha\log n\rceil} \left(\frac{1}{2^i}\right)^{2d+\eps}\right)\\
&=O\left(n^{1+\alpha(2d+\eps)}\right),
\end{align*}
and the query time of each $DS(Q)$, $Q\in W'$, is bounded by $O\left(\log \left(n^\alpha\right)\right) \leq  O(\log n)$.

For each pair of indices $1\leq i<k\leq n$, we do the following. Let $E_{i,k}$ denote the ellipsoid
of major axis $c|p_ip_k|$ with foci $p_i$ and $p_k$. The chain $(p_{i+1},\dots , p_{k-1})$ is subdivided
into $O(\log n)$ maximal subchains in $W'$, using
$O(n^{1-\alpha})$ chains of length $\Theta(n^\alpha)$ from $W'$,
at most two subchains from all $\lceil a\log n\rceil$ lower levels
of the recursion tree.
For each of these subchains $Q\in W'$, query the data structure $\DS(Q)$ with the ellipsoid $E_{i,k}$.
The query time over all pairs $1\leq i<k\leq n$ is bounded above by
\[
\binom{n}{2} \cdot O(n^{1-\alpha}) \cdot O(\log n)
=O\left(n^{3-\alpha}\ \log n\right).
\]
By setting $\alpha=2/(2d+1)$, both the preprocessing time and the overall query time are bounded by $O\left(n^{3-\alpha+\eps}\right)=O\left(n^{3-2/(2d+1)+\eps}\right)$, as claimed.
}
\end{proof}

In the decision algorithm in the proof of Theorem~\ref{thm:alg},
only the construction of the data structures $\DS(Q)$, $Q\in W$,
uses randomization,  which is independent of the value of $c$.
The parameter $c$ is used for defining the ellipsoid $E_{i,k}$, and the queries
to the data structures; this part is deterministic. Hence, we can find the optimal value
of $c$ by Megiddo's parametric search~\cite{Meg83} in the second part of the algorithm.

Megiddo's technique reduces an optimization problem to a corresponding
decision problem at a polylogarithmic factor increase in the running time. An optimization problem
is amenable to this technique if the following three conditions are met~\cite{Salowe04}:
(1) the objective function is monotone in the given parameter;
(2) the decision problem can be solved by evaluating bounded-degree polynomials, and
(3) the decision problem admits an efficient parallel algorithm
(with polylogarithmic running time using a polynomial number of processors).
All three conditions hold in our case: The area of each ellipsoid with foci in $S$ monotonically
increases with $c$; the data structure of~\cite{MP15} answers ellipsoid range counting queries
by evaluating polynomials of bounded degree; and the $\binom{n}{2}$ queries can be performed in parallel.
Alternatively, Chan's randomized optimization technique~\cite{Chan99} is also applicable. Both techniques
yield the following result.

\begin{corollary}
\label{cor:alg}
There are randomized algorithms that can find, for a polygonal chain $P=(p_1,\dots , p_n)$ in $\RR^d$,
the minimum $c\geq 1$ for which $P$ is a $c$-chain
in $O\left(n^{3-1/d}\ {\rm polylog}\ n\right)$ expected time and $O(n\log n)$ space.
\end{corollary}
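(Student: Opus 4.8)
The plan is to reduce the optimization problem to the decision problem of Theorem~\ref{thm:alg} and to apply a parametric search framework. First I would observe that
\[
c^* = \max_{1 \le i < j < k \le n} \frac{|p_ip_j| + |p_jp_k|}{|p_ip_k|}
\]
is exactly the minimum value for which $P$ is a $c$-chain, and that the predicate ``$P$ is a $c$-chain'' is monotone: it holds for all $c \ge c^*$ and fails for all $c < c^*$. This monotonicity is what makes the decision algorithm usable as a search oracle. A naive evaluation of $c^*$ over all $\binom{n}{3}$ triples costs $\Theta(n^3)$, so the point is to beat this using the faster decision procedure.

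The crucial structural feature I would exploit is the clean separation of randomness and parameter inside the decision algorithm. The only randomized component is the construction of the ellipse range-searching structures $\DS(Q)$ over the $O(\log n)$ levels of the recursion tree, and this construction depends only on the point set, not on $c$. I would therefore build all the $\DS(Q)$ once, in $O(n^{1+\eps})$ expected time and $O(n\log n)$ space, and regard everything downstream---the definition of the ellipses $E_{i,k}$ and the $\binom{n}{2}$ batches of range queries---as a deterministic computation whose comparisons are low-degree polynomial sign tests in the single parameter $c$. With this in hand, the query phase becomes a generic algorithm of the kind required by parametric search.

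Next I would run Megiddo's parametric search~\cite{Meg83} on this deterministic query phase. Each comparison made during a query is a sign test of a bounded-degree polynomial in $c$ (the structure of~\cite{MP15} answers ellipse range-counting by evaluating such polynomials, and the semi-major axis of $E_{i,k}$ depends on $c$), so each comparison contributes one critical value of $c$. To keep the overhead polylogarithmic, I would use a parallel version of the query phase: at each parallel step I collect the critical values produced by the mutually independent comparisons---in particular the $\binom{n}{2}$ queries that run in parallel---sort them, and resolve them all at once by binary search, each probe being a single sequential call to the decision algorithm. Since the parallel algorithm has polylogarithmic depth and polynomially many processors, the total number of decision calls is polylogarithmic, each costing $O(n^{2.5}\,\mathrm{polylog}\,n)$; thus the whole search stays within $O(n^{2.5}\,\mathrm{polylog}\,n)$ expected time and $O(n\log n)$ space. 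As the paper notes, Chan's randomized optimization technique~\cite{Chan99} can be substituted here to obtain the same bounds with a somewhat cleaner analysis, since it needs only the decision oracle together with an implicit decomposition of the candidate set of $O(n^2)$ critical ratios.

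The main obstacle I anticipate is verifying the three preconditions cleanly---monotonicity, bounded-degree polynomial comparisons, and an efficient parallel decision procedure---and in particular confirming that the parallel simulation inflates neither the running time nor the $O(n\log n)$ space bound. Monotonicity is immediate from the formula for $c^*$; the polynomial-degree condition follows from the algebraic nature of the ellipse range-searching primitive; and the parallel condition holds because the per-pair queries are independent and each descends a tree of depth $O(\log n)$. Once these are confirmed, the corollary follows directly.
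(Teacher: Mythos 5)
Your proposal is correct and follows essentially the same route as the paper: you exploit the fact that randomization is confined to building the data structures $\DS(Q)$ (which do not depend on $c$), and then apply Megiddo's parametric search (or Chan's technique) to the deterministic query phase, verifying the same three conditions---monotonicity in $c$, bounded-degree polynomial comparisons via the range-searching structure of~\cite{MP15}, and parallelizability of the $\binom{n}{2}$ queries. Your write-up is in fact somewhat more detailed than the paper's, which states these conditions and cites~\cite{Salowe04} without spelling out the critical-value bookkeeping.
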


We note that, for $c=1$, the test takes $O(n)$ time: it suffices
to check whether points $p_3,\dots, p_n$ lie on the
line spanned by $p_1p_2$, in that order.

\subparagraph*{Remark.}
Recently, Agarwal et al.~\cite[Theorem~13]{AgarwalAEZ19} designed a data structure for semi-algebraic range searching queries that supports $O(\log n)$ query time, at the expense of higher space and preprocessing time.
The size and preprocessing time depend on the number of free parameters that describe the semi-algebraic set.
An ellipsoid in $\RR^d$ is defined by $2d+1$ parameters: the coordinates of its foci and the length of its major axis.
Specifically, they showed that, for every $d\geq 2$ and $\eps>0$, there is a data structure with $O(n^{2d+1+\eps})$ space and $O(n^{2d+1+\eps})$ expected preprocessing time that can report the number of points in $S$ contained in a query ellipsoid in $O(\log n)$ time.
This data structure allows for a tradeoff between preprocessing time and overall query time in the algorithm above.
However the resulting tradeoff does not seem to yield an improvement over the expected running time in
Theorem~\ref{thm:alg} for any $d\geq 2$.

\section{Conclusion}\label{sec:conclusion}
We conclude with some remarks and open problems.
\smallskip

\begin{enumerate}
\item
The lower bound construction in the plane can be slightly improved as follows.
For $m\geq 1$, let $P^m_* = g_2(P^m)\cup g_3(P^m)$, see \textsc{Fig.}~\ref{fig:refined-chain}\,(right).
Observe that $P^m_*$ is a $c$-chain with $n=4^m/2+1$ vertices and stretch factor
\[\sqrt{c(c-2)/8}(n-1)^{\frac{1+\log(c-2)-\log c}{2}}.\]
Since $\sqrt{c(c-2)/8}\geq 1$ for $c\geq 4$, this improves the result of Theorem~\ref{thm:lower-bound}
by a constant factor. Since this construction does not improve the exponent, and the analysis would be
longer (requiring a case analysis without new insights), we omit the details.
\begin{figure}[htbp!]
\centering
\hspace*{15mm}
\begin{tikzpicture}
\path (0,0) pic[scale=0.7] {p4};
\end{tikzpicture}
\hfill
\begin{tikzpicture}[scale=0.7]
\draw (4.545, 0.000) -- (4.588, 0.425) -- (4.170, 0.510) -- (4.597,
0.510) -- (4.639, 0.934) -- (4.221, 1.019) -- (4.095, 0.612) --
(4.137, 1.036) -- (3.719, 1.121) -- (4.146, 1.121) -- (4.189, 1.546)
-- (4.231, 1.121) -- (4.658, 1.121) -- (4.701, 1.546) -- (4.282,
1.631) -- (4.709, 1.631) -- (4.752, 2.056) -- (4.334, 2.141) --
(4.207, 1.733) -- (4.250, 2.158) -- (3.832, 2.243) -- (3.705, 1.835)
-- (4.098, 1.668) -- (3.680, 1.753) -- (3.554, 1.346) -- (3.596,
1.770) -- (3.178, 1.855) -- (3.605, 1.855) -- (3.648, 2.280) --
(3.229, 2.365) -- (3.103, 1.957) -- (3.146, 2.382) -- (2.727, 2.467)
-- (3.154, 2.467) -- (3.197, 2.892) -- (3.240, 2.467) -- (3.666,
2.467) -- (3.709, 2.892) -- (3.291, 2.977) -- (3.718, 2.977) --
(3.760, 3.401) -- (3.803, 2.977) -- (4.230, 2.977) -- (3.812, 2.892)
-- (3.854, 2.467) -- (4.281, 2.467) -- (4.324, 2.892) -- (4.367,
2.467) -- (4.793, 2.467) -- (4.836, 2.892) -- (4.418, 2.977) --
(4.845, 2.977) -- (4.887, 3.401) -- (4.469, 3.486) -- (4.343, 3.079)
-- (4.385, 3.503) -- (3.967, 3.588) -- (4.394, 3.588) -- (4.437,
4.013) -- (4.479, 3.588) -- (4.906, 3.588) -- (4.949, 4.013) --
(4.530, 4.098) -- (4.957, 4.098) -- (5.000, 4.523) -- (5.043, 4.098)
-- (5.470, 4.098) -- (5.051, 4.013) -- (5.094, 3.588) -- (5.521,
3.588) -- (5.563, 4.013) -- (5.606, 3.588) -- (6.033, 3.588) --
(5.615, 3.503) -- (5.657, 3.079) -- (5.531, 3.486) -- (5.113, 3.401)
-- (5.155, 2.977) -- (5.582, 2.977) -- (5.164, 2.892) -- (5.207,
2.467) -- (5.633, 2.467) -- (5.676, 2.892) -- (5.719, 2.467) --
(6.146, 2.467) -- (6.188, 2.892) -- (5.770, 2.977) -- (6.197, 2.977)
-- (6.240, 3.401) -- (6.282, 2.977) -- (6.709, 2.977) -- (6.291,
2.892) -- (6.334, 2.467) -- (6.760, 2.467) -- (6.803, 2.892) --
(6.846, 2.467) -- (7.273, 2.467) -- (6.854, 2.382) -- (6.897, 1.957)
-- (6.771, 2.365) -- (6.352, 2.280) -- (6.395, 1.855) -- (6.822,
1.855) -- (6.404, 1.770) -- (6.446, 1.346) -- (6.320, 1.753) --
(5.902, 1.668) -- (6.295, 1.835) -- (6.168, 2.243) -- (5.750, 2.158)
-- (5.793, 1.733) -- (5.666, 2.141) -- (5.248, 2.056) -- (5.291,
1.631) -- (5.718, 1.631) -- (5.299, 1.546) -- (5.342, 1.121) --
(5.769, 1.121) -- (5.811, 1.546) -- (5.854, 1.121) -- (6.281, 1.121)
-- (5.863, 1.036) -- (5.905, 0.612) -- (5.779, 1.019) -- (5.361,
0.934) -- (5.403, 0.510) -- (5.830, 0.510) -- (5.412, 0.425) --
(5.455, 0.000);
\end{tikzpicture}
\caption{The chains $P^4$ (left) and $P^4_*$ (right).}\label{fig:refined-chain}
\end{figure}
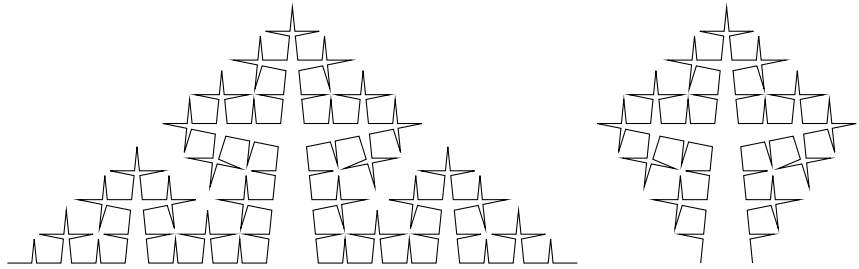

\item
The lower bound construction in the plane depends on a parameter
$c_*=(c-2)/2$. If $c$ were used instead,
the condition $c\geq 4$ in Theorem~\ref{thm:lower-bound} could be replaced
by $c\geq 1$, and the bound could be improved from
\[(n-1)^{\frac{1+\log(c-2)-\log c}{2}} \quad\text{to}\quad (n-1)^{\frac{1+\log c-\log(c+1)}{2}}.\]
Although we were unable to prove that the resulting $P^m$'s, $m\in\NN$,
are $c$-chains, a computer program has verified that the first
few generations of them are indeed $c$-chains.

\item
The upper bounds in Theorems~\ref{thm:logc}--\ref{thm:1/2}
(and their generalizations to higher dimensions, \eg, Theorem~\ref{thm:dspace})
are valid regardless of whether the chain is crossing or not.
On the other hand, the lower bounds in Theorem~\ref{thm:lower-bound}
and Theorem~\ref{thm:d-lower-bound} are given by noncrossing chains.
A natural question is whether sharper upper bounds hold if the chains
are required to be noncrossing.
Specifically, can the exponent of $n$ in the upper bound for $\RR^d$ be reduced
to $\frac{d-1}{d}-\eps$, where $\eps>0$ depends on $c$?

\item
The running time of the algorithm in Theorem~\ref{thm:alg} is sub-cubic,
but super-quadratic.
Is this necessary, or is it possible to decide the $c$-chain property
in time $O(n^2)$ or better?
\end{enumerate}

\bibliographystyle{plainurl}
\bibliography{stretch}

\end{document}